\documentclass[letterpaper]{article} 
\usepackage{aaai2027}  

\usepackage[hyphens]{url}  
\usepackage{graphicx} 
\usepackage{natbib}  
\usepackage{caption} 
\usepackage{algorithm}

\usepackage{newfloat}
\usepackage{listings}
\DeclareCaptionStyle{ruled}{labelfont=normalfont,labelsep=colon,strut=off} 
\floatstyle{ruled}
\newfloat{listing}{tb}{lst}{}
\floatname{listing}{Listing}

\usepackage{booktabs}

\usepackage[noend]{algpseudocode}

\algdef{SE}[DOWHILE]{Do}{doWhile}{\algorithmicdo}[1]{\algorithmicwhile\ #1}%

\usepackage{graphicx}
\usepackage{amsmath}
\usepackage{amsthm}
\usepackage{amssymb}
\usepackage{thmtools, thm-restate}
\usepackage{tikz}
\usetikzlibrary{calc,shapes,arrows}
\usepackage{pgfplots}
\pgfplotsset{compat=1.18}
\usepackage{xspace}
\usepackage{paralist}
\usepackage{mathtools}
\usepackage{todonotes}

\title{Progression- vs Automata-based Anticipatory Monitoring of LTL over Finite Traces (Extended Version)}
\author{
Sarah Winkler\textsuperscript{\rm 1},
Toryn Klassen\textsuperscript{\rm 2},
Sheila McIlraith\textsuperscript{\rm 2},
Marco Montali\textsuperscript{\rm 1}
}
\affiliations {
    \textsuperscript{\rm 1}Free University of Bozen-Bolzano, Italy\\
    \textsuperscript{\rm 2}University of Toronto, Canada\\
\{winkler, montali\}@inf.unibz.it,
\{toryn, sheila\}@cs.toronto.edu
}

\newtheorem{theorem}{Theorem}

\newtheorem{proposition}[theorem]{Proposition}
\newtheorem{remark}[theorem]{Remark}
\newtheorem{definition}{Definition}
\newtheorem{example}{Example}

\renewcommand{\phi}{\varphi}

\newcommand{\U}{\mathrel{\mathsf{U}}}
\newcommand{\R}{\mathrel{\mathsf{R}}}
\newcommand{\G}{\mathsf{G}\xspace}
\newcommand{\F}{\mathsf{F}\xspace}
\newcommand{\X}{\mathsf{X}\xspace}
\newcommand{\wX}{\mathsf{X}_w\xspace}

\newcommand{\mc}[1]{\mathcal{#1}}

\newcommand{\ints}{\mathit{int}}
\newcommand{\rats}{\mathit{rat}}
\newcommand{\bools}{\mathit{bool}}
\newcommand{\last}{\mathit{last}}
\newcommand{\CC}{\mc C}
\renewcommand{\AA}{\mc A}
\newcommand{\LL}{\mathcal L}

\newcommand{\DFA}[1][\psi]{\mathcal D_{#1}}

\newcommand{\inn}{\,{\in}\,} 

\newcommand{\pre}[1]{{#1}^-} 
\newcommand{\trace}[1]{\langle #1\rangle}

\newcommand{\m}[1]{\mathsf{#1}}

\newcommand{\dom}{\mathit{dom}}

\newcommand{\lemref}[1]{Lem.~\ref{lem:#1}}

\newcommand{\defref}[1]{Def.~\ref{def:#1}}

\newcommand{\thmref}[1]{Thm.~\ref{thm:#1}}

\newcommand{\exaref}[1]{Ex.~\ref{exa:#1}}
\newcommand{\figref}[1]{Fig.~\ref{fig:#1}}

\newcommand{\LTLf}{\textsc{LTL}$_f$\xspace}
\newcommand{\LTL}{\textsc{LTL}\xspace}
\newcommand{\Declare}{\textsc{Declare}\xspace}

\newcommand{\aLTLf}{\textsc{aLTL}$_f$\xspace}

\newcommand{\PS}{\text{\textsc{ps}}}
\newcommand{\PV}{\textsc{pv}}
\newcommand{\CS}{\textsc{cs}}
\newcommand{\CV}{\textsc{cv}}

\newcommand{\llbracket}{[\![}
\newcommand{\rrbracket}{]\!]}
\newcommand{\RV}{\mathit{RV}}
\newcommand{\progress}{\mathit{progress}}

\newcommand{\sarahtodo}[2][]{\todo[linecolor=yellow!80!red,backgroundcolor=yellow!80!red!25,bordercolor=yellow!80!red, #1]{\tiny{Sarah: #2}}}

\definecolor{darkgreen}{rgb}{0.2, 0.9, 0.6}

\newcommand{\progmon}{\textsc{ProgMon}\xspace}
\newcommand{\dfamon}{\textsc{DfaMon}\xspace}

\begin{document}


\newcommand{\rephrase}[1]{\textcolor{olive}{#1}}
\newcommand{\revise}[1]{\textcolor{blue}{#1}}
\newcommand{\review}[1]{\textcolor{blue}{#1}}

\newcommand{\alt}[1]{\textcolor{brown}{#1}}
\newcommand{\althide}[1]{}

\newcommand{\added}[1]{\textcolor{red}{#1}}
\newcommand{\todoo}[1]{\textcolor{blue}{({\bf TO DO:} #1)}}

\newcommand{\myhide}[1]{}

\newcommand{\remove}[1]{\textcolor{green}{#1}}
\newcommand{\removemaybe}[1]{\textcolor{orange}{#1}}
\newcommand{\removecrcr}[1]{}
\newcommand{\removecrc}[1]{\textcolor{orange}{#1}}
\newcommand{\removehide}[1]{}

\newcommand{\removeifneeded}[1]{\textcolor{cyan}{#1}}

\newcommand{\commentsmKEEP}[1]{\textcolor{cyan}{({\bf SM:} #1)}}

\newcommand{\draftnote}[1]{\textcolor{blue}{#1}}

\newcommand{\drafthide}[1]{}

\newcommand{\toaddcr}[1]{}
\newcommand{\toaddcrc}[1]{}
\newcommand{\addCRC}[1]{}
\newcommand{\addtocr}[1]{}



\newif\ifcomments
\commentsfalse

\newcommand{\commentsmbox}[2][]{\generalCommentFlex[#1]{cyan}{SM}{#2}}
\newcommand{\commenttkbox}[2][]{\generalCommentFlex[#1]{magenta}{TK}{#2}}
\newcommand{\commentswbox}[2][]{\generalCommentFlex[#1]{violet}{SW}{#2}}
\newcommand{\commentmmbox}[2][]{\generalCommentFlex[#1]{orange}{MM}{#2}}

\ifcomments

\setlength\marginparwidth{15mm}
\newcommand{\todocustom}[3]{\todo[linecolor=#2,backgroundcolor=#2!25,bordercolor=#2,#3]{#1}}
\newcommand{\generalCommentFlex}[4][]{\todocustom{{\bf #3}: #4}{#2}{inline,size=\scriptsize,caption={},#1}}

\newcommand{\commentsm}[1]{\textcolor{cyan}{({\bf SM:} #1)}}

\newcommand{\commenttk}[1]{\textcolor{magenta}{({\bf TK:} #1)}}

\newcommand{\commentsw}[1]{\textcolor{violet}{({\bf SW:} #1)}}

\newcommand{\commentmm}[1]{\textcolor{magenta}{({\bf MM:} #1)}}

\newcommand{\revisit}[1]{\textcolor{blue}{#1}}

\else

\newcommand{\commentsm}[1]{}
\newcommand{\commentsmhide}[1]{}
\newcommand{\commenttk}[1]{}
\newcommand{\commentsw}[1]{}
\newcommand{\commentmm}[1]{}

\renewcommand{\sarahtodo}[1]{}

\renewcommand{\added}[1]{#1}
\newcommand{\revisit}[1]{#1}
\renewcommand{\removeifneeded}[1]{#1}

\newcommand{\generalCommentFlex}[4][]{}

\fi


\maketitle

\begin{abstract}
When safety-critical systems are developed from a known internal
specification, their correctness can be established by model checking.
In the frequent case where such a specification is 
unknown or inaccessible, runtime verification 
presents an attractive alternative, e.g., to ascertain that
autonomous and agentic systems as well as business processes satisfy desirable properties and/or comply with safety
requirements.
In this paper we study \emph{anticipatory monitoring}, \removeifneeded{an advanced form of runtime verification,} where the monitoring state is determined by both the trace prefix seen so far, and all its possible finite-length, future continuations. We focus on monitoring
linear-time properties that may involve arithmetic constraints. Automata-based approaches, the de-facto standard in this setting, are notorious for their computational complexity. We propose an alternative approach based on progression
and \LTLf satisfiability checking, for both propositional and arithmetic settings. We experimentally compare the automata- and progression-based approaches, and a third method that combines the two. Our experiments suggest that the progression-based approach often succeeds in producing a verdict when the automata constructions do not terminate, especially for the arithmetic setting. For the propositional setting, the combined technique provides a good tradeoff.

\end{abstract}

\section{Introduction}
\label{sec:intro}

Checking whether a dynamical system operates as intended is \added{central to realizing} reliable, trustworthy
artificial intelligence (AI) and other systems. 
When the system under scrutiny comes with a fully accessible specification of its internal functioning, this can be accomplished through model-based formal analysis, such as model checking \cite{BK08}. Unfortunately, most modern systems defy such formal analysis because their internal specification is inaccessible, too complex to formalize, or \added{because} subsystems operate as black boxes.
Notable examples include:
\begin{inparaenum}[\itshape (i)]
\item safety-critical autonomous settings \cite{BCCG25} such as robotic \cite{tessla-ros}, aerial \cite{MoRS17}, and cyber-physical \cite{NCKA25} systems;
\item business processes with compliance requirements \cite{LMMR15,DDMM22};
\item \added{properties relating to} agentic systems, ranging from interaction protocols \cite{ChMMT13,AJCD16,DaTY18} to \added{attainment of} rewards~\cite{CamachoIKVM19,GiacomoFIPR20,AdalatB26}, 
and overall alignment \cite{CCGD26} including \added{of }LLMs \cite{AlKM26}.
\end{inparaenum}

In such settings, an attractive solution is to pair the system with a monitor that tracks the 
events occurring during an execution, and verifies at runtime whether the system 
\revisit{is operating}
as expected.
\addCRC{In the current software ecosystem where
AI coding agents and web apps enable rapid development
and distribution of software with few assurances of software
quality, such monitoring can provide some 
quality guarantees.} 
\emph{Runtime verification} (RV)
\added{is an established}
approach to that end: given a property of interest $\varphi$ expressed in a suitable logic, a corresponding monitor is automatically synthesized to check whether an  execution conforms to $\varphi$.
In this work we adopt linear-time temporal logic over finite traces (\LTLf) \cite{deGiacomoV13}  as the specification logic, possibly extended with arithmetic constraints (\aLTLf) as in \cite{FMPW23}.
%
We consider \emph{anticipatory monitoring} \cite{BaLS10} where monitors not only check whether the trace $\tau$ observed so far satisfies the property of interest, but also how possible continuations of $\tau$ affect its satisfaction.
More precisely, in the RV-LTL  semantics,  after each trace $\tau$ the input property $\varphi$  can be either permanently satisfied ($\varphi$ is satisfied by $\tau$ and will stay so no matter how $\tau$ is continued), or currently satisfied ($\varphi$ is satisfied by $\tau$, but there is a continuation of $\tau$ that can violate it) or have one of the dual values of permanent and current violation.

\myhide{
\begin{example}
Consider a simple purchase-to-pay process where whenever an order is closed it must get paid, $\G( \m{close} \to \F \m{pay})$ in \LTLf \added{(See Preliminaries)}; and whenever an order is cancelled, it may not be paid: $\G( \m{cancel} \to \neg\F \m{pay})$. Given a trace $\trace{\{\m{close}\}, \{\m{cancel}\}}$, the conjunction of these two formulas is \emph{permanently violated} as no continuation can satisfy the contradictory requirements. 
\end{example}
}

The de-facto standard technique to perform RV-LTL monitoring of \LTL and \LTLf is based on deterministic finite automata (DFA) \cite{BaLS11,MMWV11,DDMM22}. This method is \emph{eager} in that  a DFA is constructed during a preprocessing phase as a trace-independent monitoring device that is subsequently used to monitor one or more traces. This approach has been extended to \aLTLf~\cite{FMPW23}, but suffers in either setting from the doubly-exponential preprocessing complexity. Even more critical, since anticipatory monitoring of \aLTLf is not solvable in general, for \aLTLf this method often fails to terminate in the preprocessing phase.

In this paper, we introduce a complementary, \emph{lazy} anticipatory monitoring technique that relies on \emph{progression}~\cite{BacchusK00} and \LTLf satisfiability checking.
Our approach 
differs from automata-based monitoring in two respects: 
\begin{inparaenum}[\itshape (i)]
\item it does not incur any preprocessing cost, but distributes it incrementally at runtime; 
\item it does not construct a general monitoring device valid for all traces, but contextualises and localises reasoning to the trace under scrutiny.
\end{inparaenum}
Notably, we show that essentially the same progression-based approach applies to both propositional \LTLf and the more sophisticated, data-aware setting of \aLTLf. 
Our goal is then to compare, relate and combine the two complementary approaches of automata- and progression-based monitoring. 

More precisely, our contributions are:
\begin{inparaenum}[\itshape (i)]
\item an anticipatory \LTLf monitoring technique based on progression and satisfiability checking (\progmon), which we rigorously prove to be correct;
\item an implementation of \progmon, as well as a combined approach that applies DFA-based monitoring (\dfamon) and \progmon in parallel;
\item 
an extensive experimental evaluation comparing \progmon, \dfamon and the combined approach on different data sets, including real-life event logs from process mining;
\item the lifting of \progmon to the data-aware temporal logic \aLTLf, based on a suitable definition of progression for \aLTLf;
\item a demonstration that \progmon terminates on relevant fragments of \aLTLf, including new solvability results; and
\item the extension of the comparative evaluation to \aLTLf. 
\end{inparaenum}
Our main results are that for the propositional case, \progmon can handle 18\% more and the combined approach about 30\% more benchmarks than \dfamon because the latter experiences timeouts on large formulae, though for cases where both succeed \dfamon is about twice as fast. The combined approach\added{, being faster than progression,} offers a good tradeoff. 
The lazy and trace-tailored approach of \progmon is particularly appealing for the arithmetic setting, where DFA-based monitoring often fails to terminate in the preprocessing phase: \progmon is both faster and handles more benchmarks than \dfamon.

In the next 
section we provide further background and related work, followed by a section with technical preliminaries. After that, we define \progmon first for the propositional, and then for the arithmetic case. We present our evaluation before we  conclude by discussing the implications of our findings.
Proofs, further experimental data, and code can be found in the supplementary material.

\section{Background and Related Work}
Most monitoring approaches in the literature limit themselves to checking whether the trace prefix seen so far satisfies the desired property~\cite{DAngeloSSRFSMM05,BartocciFFR18,BasinKMZ15}. Anticipatory monitoring guarantees instead that violations are detected at the earliest possible moment \cite{MMWV11,DDMM22}, enabling to promptly react to issues at runtime, which is akin to process enactment \cite{PFPA13,DDMM22} and agent regimentation \cite{EMST03}. However, this task is more complex as it involves temporal reasoning as opposed to mere query answering.
While RV-LTL, also known as LTL$_4$~\cite{MedhatBFJ26}, has been originally defined for LTL over infinite traces, it has later been widely investigated in the finite-trace setting \cite{MMWV11,DDMM22}, considering properties based on \LTLf and extensions thereof \cite{deGiacomoV13}. Interestingly, while not all LTL properties are monitorable in RV-LTL \cite{BaLS11}, every \LTLf formula actually is. 

Automata-based monitoring techniques \cite{BaLS11,MMWV11,DDMM22} for RV-LTL construct a  DFA for the input property, that is directly turned into a monitor by assigning each DFA state to one of the four RV-LTL truth values through simple reachability checks. 
This approach has been lifted to data-aware variants of \LTLf, combining such automata-theoretic techniques with SMT reasoning \cite{FMPW23}. 
However, while tremendous algorithmic advancements have been achieved in building and manipulating DFAs for \LTLf (see, e.g., \cite{BansalLTV20,XiaoL0SPV21,DeGF21,DuretLutzZPGV25}), automata construction remains a bottleneck, and makes monitoring hardly feasible when the input property $\phi$ is a complex formula. This is the standard case in declarative business process monitoring \cite{LMMR15,CicM22}, where  $\phi$ is typically a large conjunction of process constraints.

Progression is at the core of some of the most advanced techniques for \LTLf automata construction, satisfiability, and synthesis \cite{GiacomoFLVX022,NXXX23,XLZS26}. In the context of runtime verification it has been so far investigated for forms of prefix-monitoring without anticipation \cite{RosuH05,LengH19} and also for anticipatory monitoring but with a different finite-trace semantics~\cite{BauerFMSD2016}. \addCRC{Consider adding Alamdari et al.}

\section{Preliminaries}
\label{sec:background}

We first recall propositional linear-time temporal logic over finite traces (\LTLf)~\cite{deGiacomoV13}:

\begin{definition}
\label{def:language}
For a set of propositions $P$ and $p\in P$, the set of properties  $\LL_P$ is defined by the following grammar:
\[\psi :: = 
p \mid 
\neg \psi \mid 
\psi \wedge \psi \mid 
\psi \vee \psi \mid
\X \psi \mid 
\wX \psi \mid
\psi \U \psi\mid
\psi \R \psi.
\]
\end{definition}

$\wX$ is the weak next operator, and
the usual transformations apply, namely 
$\bot \equiv (p \wedge \neg p)$ for an arbitrary $p\,{\in}\,P$, $\top \equiv \neg \bot$,
$\F \psi \equiv \top \U \psi$, and
$\G \psi \equiv \bot \R \psi$ \added{where $\F$,$\G$,$\U$, and $\R$ denote Finally (eventually), Globally (always), Until, and Release}.

A \emph{trace} is a sequence $\tau \in (2^P)^*$ written as 
$\tau = \trace{w_0w_1\cdots w_{n-1}}$.
We refer to the $i$th state $w_i$ by $\tau(i)$.
Given $i$, $0 \leq i < n$, the prefix $\trace{w_0w_1\cdots w_i}$ is denoted by $\tau_{\leq i}$, the suffix $\trace{w_{i+1},\dots,w_{n-1}}$ is denoted by $\tau_{>i}$, and the length of $\tau$ is denoted by $|\tau|=n$.

\begin{definition}
\label{def:semantics}
A trace $\tau\,{=}\,\trace{w_0w_1\cdots w_{n-1}}$ \emph{satisfies} $\psi{\in} \LL_P$, written 
$\tau \models \psi$, if $\tau,0 \models \psi$ holds, 
where for all $i$, $0\leq i < n$:

\noindent
\begin{tabular}{@{}l@{\:}l@{}}
$\tau,i \models$&$p$ if $p \in \tau(i)$,\\
$\tau,i \models$&$\neg \psi$  if $\tau,i \not\models \psi$, \\
$\tau,i \models$&$ \psi_1 \wedge \psi_2$ if
 $\tau,i \models \psi_1$ and $\tau,i \models \psi_2$, \\
$\tau,i \models$&$ \psi_1 \vee \psi_2$ if
 $\tau,i \models \psi_1$ or $\tau,i \models \psi_2$,\\
$\tau,i \models$&$ \X\psi$ if $i<n{-}1$ and $\tau,i{+}1 \models \psi$,\\
$\tau,i \models$&$ \wX\psi$ if $i = n{-}1$ or $\tau,i{+}1 \models \psi$,\\
$\tau,i \models$&$ \psi_1 \U \psi_2$ if either
 $\tau,i \models \psi_2$, or $i\,{<}\,n{-}1$,\\
 &\qquad
 $\tau,i \models \psi_1$ and
 $\tau,i{+}1\models \psi_1 \U \psi_2$, and\\
$\tau,i \models$&$ \psi_1 \R \psi_2$ if
 $\tau,i \models \psi_2$ and either $\tau,i \models \psi_1$, or\\
 &\qquad
 $i=n{-}1$, or both $i<n{-}1$ and
 $\tau,i{+}1\models \psi_1 \R \psi_2$.
\end{tabular}
\end{definition}
A property $\psi$ is in negation normal form (NNF) if negation appears only in front of propositions.

\paragraph{\LTLf with arithmetic.}
To obtain more expressive constraints, the set of propositional atoms $P$ in \defref{language} can be replaced by atoms from a first-order theory~\cite{GGG22}, and of particular practical interest are arithmetic constraints. In the following, we adopt the logic from \cite{FMPW23} and call it \aLTLf.

To this end, let $V$ be a fixed, finite set of variables such that each $v\,{\in}\,V$ has type $\bools,$ $\ints$ or $\rats$,
with domains $\dom(\bools)=\allowbreak\mathbb B$, $\dom(\ints)=\allowbreak\mathbb Z$ and $\dom(\rats) = \mathbb Q$.
Moreover, let $V' = \{v' \mid v\in V\}$ be a primed copy of $V$; these variables will be used to express \emph{lookahead}, i.e., they are evaluated one instant ahead on the trace, as defined below.
We denote by $\CC(V\cup V')$ the set that consists of all $p\in V$ of sort $\bools$, together with all well-typed linear arithmetic constraints over $V\cup V'$, i.e., all literals of the form $e=e'$, $e\leq e'$, $e < e'$, such that $e$ and $e'$ are well-typed expressions conforming to the grammar 
$e ::= k \mid v \mid v' \mid e + e \mid k\cdot e$
where $v\in V$ and $k \in \mathbb Z \cup \mathbb Q$.
For instance, $x'+1 \geq 2x-y$, and $x = 7y'$ are constraints.

\begin{definition}
The set of arithmetic \LTLf properties $\LL_\AA$
is given by the set of all properties obtained from \defref{language} by
replacing $P$ with $\CC(V\cup V')$.
\end{definition}

For instance, given integer variables $V = \{x,y\}$, $(x' \geq x) \U (x=10)$ and  $\G (2x=y')$ are arithmetic \LTLf properties.
An \emph{assignment} $\alpha$ maps every $v\inn V$ to a value $\alpha(v)$ in the domain of its sort. 
Arithmetic \LTLf properties are evaluated over \emph{arithmetic traces}: such a trace $\tau$ of length $n\,{\geq}\,1$ is a finite sequence
$\trace{\alpha_0, \alpha_1, \dots, \alpha_{n-1}}$ of assignments with domain $V$. 
An expression $e$ is \emph{well-defined} at instant $i$ of $\tau$ if either $i<n{-}1$, or both $i=n{-}1$ and $e$ does not contain $V'$.

\begin{definition}
\label{def:arith:semantics}
If an expression $e$ is well-defined at instant $i$ of a trace $\tau$ of length $n$, 
its \emph{evaluation} $[\tau,i](e)$ at instant $i$ is:\\[1ex] 
\begin{tabular}{@{}r@{\,}l@{\:\:}r@{\,}l@{}}
$[\tau,i](k)$ &$= k$  &
$[\tau,i](e{+}e_2)$ &$=  [\tau,i](e_1){+}[\tau,i](e_2)$\\
$[\tau,i](v^{\prime})$ &$= \tau(i{+}1)(v)$ &
$[\tau,i](k\cdot e)$&$=k\cdot[\tau,i](e)$
\end{tabular}\\[1ex]
Then $\tau \models \psi$ iff $\tau,0 \models \psi$ holds.
For an arithmetic constraint $\psi=e_1 \odot e_2$, we have
$\tau,i \psi$ iff $0 \leq i < n$ and either
$e_1$ or $e_2$ are not well-defined for $\tau$ at $i$, or
$[\tau,i](e_1) \odot [\tau,i](e_2)$ holds. For $p\in V$ of type $\bools$,
$\tau,i \models p$ iff $0 \leq i < n$ and $\alpha(p)=\top$.
All other cases are defined as in \defref{semantics}.
\end{definition}

\noindent
For instance, for integer variables $x$ and $y$ the trace 
\begin{equation}
\label{eq:trace}
\langle \{x\,{\mapsto}\,0, y\,{\mapsto}\,0\}, \{x\,{\mapsto}\,1, y\,{\mapsto}\,0\}, \{x\,{\mapsto}\,2, y\,{\mapsto}\,2\}\rangle
\end{equation}
satisfies both $\G(x'\,{>}\,x) \wedge \F (x\,{=}\,2)$ and  $\G (2x=y')$.
Note that lookahead is evaluated in a weak way, i.e., constraints with lookahead always hold in the last instant.

\paragraph{Monitoring problem.}
Let $\RV = \{\PS, \CS, \CV, \PV\}$ denote the set of distinct \emph{monitoring states} of
permanent satisfaction ($\PS$),
current satisfaction ($\CS$),  
current violation ($\CV$) and permanent violation ($\PV$)~\cite{BaLS10}.

\begin{definition}
A property $\psi$ in $\LL_P$ or $\LL_\AA$ is in
\emph{monitoring state} $s \in RV$ after a trace $\tau$, written $\tau \models \llbracket \psi = s\rrbracket$, if
\begin{compactitem}
\item
$s = \PS$, $\tau \models \psi$, and $\tau\tau' \models \psi$ for every trace $\tau'$;
\item
$s = \CS$, $\tau \models \psi$, but $\tau\tau' \not\models \psi$ for some trace $\tau'$;
\item
$s = \CV$, $\tau \not\models \psi$, but $\tau\tau' \models \psi$ for some trace $\tau'$; 
\item
$s = \PV$, $\tau \not\models \psi$, and $\tau\tau' \not\models \psi$ for every trace $\tau'$.
\end{compactitem}
\end{definition}

For instance, $\tau \models \llbracket \psi = \CS\rrbracket$ means that $\tau$ satisfies  $\psi$ but there is a possible continuation $\tau\tau'$ which does not do so. 
After every trace, a property $\psi$ is in exactly one possible monitoring state. 
Given input $\tau$ and $\psi$,
the (anticipatory) \emph{monitoring problem} 
asks to determine the state $s\in RV$ s.t.  
$\tau \models \llbracket \psi = s\rrbracket$.
E.g., for $\G(x'\,{>}\,x) \wedge \F (x\,{=}\,2)$ and trace \eqref{eq:trace}, $\tau \models \llbracket \psi = \CS\rrbracket$ and $\tau_{\leq 1} \models \llbracket \psi = \CV\rrbracket$, while for 
$\tau' = \trace{\{x\,{\mapsto}\,0\}, \{x\,{\mapsto}\,3\}}$ we obtain  $\tau' \models \llbracket \psi = \PV\rrbracket$.
For arithmetic \LTLf, the monitoring problem is known to be undecidable~\cite[Thm. 6]{FMPW23}.

\paragraph{Automata-based monitoring.}
For a propositional \LTLf property $\psi \in \LL_P$, a common monitoring approach~\cite{DDMM22} is to construct a deterministic finite automaton (DFA) $\DFA$  for $\psi$, where each state of $\DFA$ is associated with a monitoring value $rv(q)$ as follows:
if $q$ is final then $rv(q)\,{=}\,\PS$ if all states reachable from $q$ are final, and $rv(q) = \CS$ otherwise; and 
if $q$ is non-final then $rv(q) = \PV$ if no state reachable from $q$ is final, and $rv(q) = \CV$ otherwise.
Every trace $\tau$ leads to a unique DFA state $q$, and 
the monitoring state of $\tau$ wrt. $\psi$ is $rv(q)$, that is, $\tau \models \llbracket \psi = rv(q)\rrbracket$.

\begin{example}
\label{exa:DFA}
For $\psi = \wX \G a \wedge \F b$, a DFA is as follows:\\
\begin{tikzpicture}[node distance=62mm]
\tikzstyle{state}=[draw, circle, inner sep=1.5pt, line width=.7pt, scale=.6]
\tikzstyle{edge}=[draw, ->, line width=.5pt]
\tikzstyle{caption}=[scale=.9]
\tikzstyle{node} = [draw,rectangle split, rectangle split parts=2,rectangle split horizontal, rectangle split draw splits=true, inner sep=3pt, scale=.65, rounded corners=2pt]
\tikzstyle{goto} = [->]
\tikzstyle{action}=[scale=.8, black]
\tikzstyle{final}=[double]
\tikzstyle{pscolor}=[fill=green!30]
\tikzstyle{cscolor}=[fill=cyan!30]
\tikzstyle{pvcolor}=[fill=red!30]
\tikzstyle{cvcolor}=[fill=orange!30]
 \node[state, cvcolor] (A) {$A$};
 \node[state, right of =A, final, cscolor] (B) {$B$};
 \node[state, below of=A, yshift=43mm, cvcolor] (C) {$C$};
 \node[state, right of=C, pvcolor] (E) {$D$};
\draw[edge] ($(A) + (-.4,0)$) -- (A);
\draw[edge] (A) -- node[action, above] {$\{b, a\}, \{b, \neg a\}$} (B);
\draw[edge] (A) -- 
node[action, left] {$\{\neg b, \neg a\}$, $\{\neg b, a\}$} 
(C);
\draw[->] (B) to[loop right, looseness=8]
  node[action, right, anchor=west] {$\{a, b\}$, $\{a, \neg b\}$} (B);
\draw[edge] (B) -- 
  node[action, right, anchor=west] {$\{\neg a, b\}$, $\{\neg a, \neg b\}$} (E);
\draw[->] (C) to[loop left, looseness=8] node[action, left, anchor=east] {$\{a, \neg b\}$} (C);
\draw[->] (C) to node[action, left, anchor=east, pos=.5, yshift=1mm] {$\{a, b\}$} (B);
\draw[->] (C) to 
  node[action, above,near end, xshift=-2mm] {$\{\neg a, b\}$, $\{\neg a, \neg b\}$} (E);
\draw[->] (E) to[loop right, looseness=8] node[action, right] {$\dots$}  (E);
\end{tikzpicture}
Here the colors of states indicate monitoring states: $A$ and $C$ correspond to $\CV$, $B$ to $\CS$ and $D$ to $\PV$. 
E.g. the trace $\{a \},$ $\{a \}, \{a, b\}$ leads to state $B$ and has monitoring state $\CS$.
\end{example}

However, for arithmetic \LTLf this approach was shown to be incorrect,  essentially, as due to lookback not all paths in the DFA are feasible with all variable valuations.
To see that, we first note that for technical reasons, the automata-based approach for \aLTLf~\cite{FMPW23} uses look\emph{back} instead of lookahead:
it introduces variables $\pre V = \{\pre{v} \mid v\in V\}$ that relate to the \emph{previous} trace instant; besides that, constraints are defined in the same way.
Lookback and lookahead are equivalent as far as expressivity is concerned: e.g. $\G(x'\,{>}\,x) \wedge \F (x\,{=}\,2)$ is equivalent to $\psi=\wX\G(x\,{>}\,\pre{x}) \wedge \F (x\,{=}\,2)$.
The DFA for $\psi$ corresponds to that in \exaref{DFA}, with $a$ replaced by $x\,{>}\,\pre{x}$ and $b$ by $x=2$.
Now, the trace $\trace{\{x\,{\mapsto}\,0\}, \{x\,{\mapsto}\,3\}}$ leads to state $C$, though its monitoring state is $\PV$ and not $\CV$.
To alleviate this problem, for \aLTLf the automata-based approach was extended with \emph{constraint graphs} (CGs), a data structure that is obtained by unrolling $\DFA[\psi]$ while computing the effect that constraints have on variables. They allow to compute first-order formulas expressing whether a (non)final DFA state is reachable.
In the monitoring algorithm, both the DFA and the CGs are constructed in a preprocessing phase, after which at each trace instant the DFA state is updated and the reachability conditions from the CG are evaluated. 
However, CGs are in general infinite so that the method need not terminate.
For details we refer to~\cite{FMW22a}.

\section{Progression Monitoring}
\label{sec:progression}

This section presents an alternative, \emph{lazy} monitoring approach based on progression~\cite{BacchusK00} and satisfiability checking for propositional \LTLf.

\paragraph{Propositional \LTLf.}
We first define  a finite-trace version of progression, which differs from the original \cite[Table 2]{BacchusK00} in that we assume a dedicated proposition $\last$ that serves to mark the final state of a trace. A trace $\trace{w_0,w_1,\dots, w_{n-1}} \in (2^{P \cup \{\last\}})^*$ is called \emph{well-formed} if $\last \not \in w_i$ for $0\leq i < n{-}1$. 

\begin{definition}[Progression]
\label{def:progression}
Let $\psi \in \LL_P$ be in NNF and $w\subseteq P \cup \{\last\}$.
The progression $\psi^+(w)$ of $\psi$ with respect to $w$ is defined as follows, where $p$ is a proposition in $P$:
\begin{compactitem}
\item 
$\top^+(w) = \top$ and $\bot^+(w) = \bot$,
\item
$p^+(w) = \top$ if $p\in w$ and $p^+(w) = \bot$ otherwise;
\item 
$(\neg p)^+(w) = \bot$ if $p\in w$ and $(\neg p)^+(w) = \top$ otherwise;
\item 
if $\psi = \psi_1 \wedge \psi_2$ then $\psi^+(w) = \psi_1^+(w) \wedge \psi_2^+(w)$;
\item 
if $\psi = \psi_1 \vee \psi_2$ then $\psi^+(w) = \psi_1^+(w) \vee \psi_2^+(w)$;
\item 
if $\psi = \X \psi_1$ then $\psi^+(w) = \psi_1$ if $\last \not\in w$, and $\psi^+(w) = \bot$ otherwise;
\item 
if $\psi = \wX \psi_1$ then $\psi^+(w) = \psi_1$ if $\last \not\in w$, and $\psi^+(w) = \top$ otherwise;
\item 
if $\psi = \psi_1 \U \psi_2$ then $\psi^+(w) = \psi_2^+(w) \vee (\psi_1^+(w) \wedge \psi)$ if $\last \not\in w$, and $\psi^+(w) = \psi_2^+(w)$ otherwise.
\item 
if $\psi = \psi_1 \R \psi_2$ then $\psi^+(w) = \psi_1^+(w) \wedge (\psi_2^+(w) \vee \psi)$ if $\last \not\in w$, and $\psi^+(w) = \psi_1^+(w)$ otherwise.
\end{compactitem}
\end{definition}

For instance, for $\psi = \wX \G a \wedge \F b$ we have $\psi^+(\{a\})=\G a \wedge \F b$ but $\psi^+(\{b\})=\G a$.
Intuitively, the progression of a property $\psi$ with respect to $w$
amounts to the ``remainder'' of $\psi$ that still has to be checked 
after observing $w$ in a trace, i.e., the remaining obligation after one monitoring step.
Next, we present an RV-LTL monitoring algorithm that exploits this correspondence.
We briefly outline its functioning:
the procedure takes as input a property $\psi \in \LL_P$ and a trace $\tau$ over $P$.
Initially, $\psi$ is transformed into negation normal form $\phi$ before a monitoring loop iterates over the trace.
In each iteration, $\phi$ is progressed twice, assuming that the trace ends (Line \ref{line:progression:progresslast}) resp. that it does not (Line \ref{line:progression:progress}).
The latter is equivalent\footnote{Here $\equiv$ refers to equivalence of propositional formulas without variables, i.e. using $\top \wedge \bot \equiv \bot$, $\top \vee \bot \equiv \top$ etc.} to $\top$ iff $\tau_{\leq i} \models \psi$, as we will show below.
A respective case distinction is performed in Line \ref{line:progression:if}.
In the \emph{else} case $\tau_{\leq i} \not\models \psi$ holds, so the property is at least currently violated. The algorithm checks whether the remaining monitoring obligation $\phi$ is satisfiable. If so, the monitoring state $s$ is  $\CV$, otherwise $s=\PV$ (Line \ref{line:progression:vio}). The \emph{if} case is dual (Line \ref{line:progression:sat}), checking satisfiability of $\neg \phi$ to test whether a future continuation can violate $\phi$.

\newcommand{\chicurr}{\chi}
\begin{algorithm}
\caption{Procedure \textsc{ProgMonitor}$(\psi,\tau)$}
\label{alg:progression:monitoring}
\begin{algorithmic}[1]
  \State $\phi \gets \mathit{NNF}(\psi)$
  \While{$0\leq i < |\tau|$} \label{line:progression:loop}
    \State $\chi \gets \phi^+(\tau(i)\cup\{\last\})$ \label{line:progression:progresslast}
    \State $\phi \gets \phi^+(\tau(i))$ \label{line:progression:progress}
    \If{$\chi \equiv \top$} \label{line:progression:if}
        \State $s \gets \CS$ \textbf{if} $\neg\phi$ is satisfiable \textbf{else} $\PS$ \label{line:progression:sat}
    \Else
        \State $s \gets \CV$ \textbf{if} $\phi$ is satisfiable \textbf{else} $\PV$ \label{line:progression:vio}
    \EndIf
    \State \textbf{output} $s$
    \EndWhile
\end{algorithmic}
\end{algorithm}

\begin{example}
Consider again the property $\psi = \wX \G a \wedge \F b$ which is already in NNF,
and the trace $\tau =\trace{\{a\}, \{a\}, \{a, b\}}$ from \exaref{DFA}.
Let $w = \{a\}$ and 
$w_l = \{a, \last\}$.
\begin{compactenum}
\item 
We have $\psi^+w_l = (\wX \G a)^+w_l \wedge (\F b)^+w_l = \top \wedge \bot \equiv \bot$ and $\psi^+w = (\wX \G a)^+w \wedge (\F b)^+w = \G a \wedge \F b$. As $\phi := \G a \wedge \F b$ is satisfiable, the monitoring state is $\CV$.
\item In the next iteration, $\phi^+w_l = (\G a)^+ w_l\wedge (\F b)^+w_l = \top \wedge \bot \equiv \bot$ while $\phi^+w= (\G a)^+w\wedge (\F b)^+w = \phi$, so the monitoring state is again $\CV$.
\item Finally, $\phi^+\{a, b, \last\}$ evaluates to $\top$ and $\phi^+\{a, b\} = (\G a)^+\{a, b\} \wedge (\F b)^+\{a, b\} = (\G a) \wedge \top \equiv \G a$. As $\neg\phi = \F \neg a$ is satisfiable, the monitoring state is $\CS$.
\end{compactenum}
\end{example}

The next paragraphs show correctness of Alg.~\ref{alg:progression:monitoring}.
To that end, progression is extended to traces as usual:
Given a trace $\tau =\trace{w_0w_1 \dots w_n}$, we write $\psi^+(\tau)$ for the repeated progression $\psi^+(w_0)^+(w_1)^+ \cdots (w_n)$.
The correctness proof below uses the following key property of progression (cf.~\cite[Theorem 4.3]{BacchusK00}).

\begin{restatable}{lemma}{lemmaprog}
\label{lem:progression}
Let $\psi \in \LL_P \cup \{\bot,\top\}$,  $\tau  \in (2^{P \cup \{\last\}})^*$ be a well-formed trace, and $i\geq 0$. 
Then $\tau \models \psi$ iff either $i \geq |\tau|-1$ and $\psi^+(\tau)\equiv \top$, or 
$\tau_{>i} \models \psi^+(\tau_{\leq i})$.
\end{restatable}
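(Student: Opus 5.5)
We prove the lemma by first establishing a one-step version and then lifting it to arbitrary prefixes by induction on $i$.

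\textbf{One-step lemma.} For every $\psi\in\LL_P\cup\{\bot,\top\}$ in NNF and every non-empty well-formed trace $\tau=\trace{w_0 w_1\cdots w_{n-1}}$:
\begin{compactitem}
\item if $n=1$, then $\tau\models\psi$ iff $\psi^+(w_0)\equiv\top$;
\item if $n>1$, then $\tau\models\psi$ iff $\tau_{>0}\models\psi^+(w_0)$.
\end{compactitem}
Well-formedness ties the two cases to the trace: $\last\in w_0$ exactly when $n=1$. (Strictly, the case $\last\in w_0$ with $n>1$ cannot occur for well-formed traces; and if $n=1$ but $\last\notin w_0$, I will read the statement as intending that $\last$ marks the final state.) Under this convention the $\last$-branch of Def.~\ref{def:progression} is used precisely when $n=1$.

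The one-step lemma is proved by structural induction on $\psi$, checking each clause of Def.~\ref{def:progression} against Def.~\ref{def:semantics}.
\begin{compactitem}
\item \emph{Literals, $\top$, $\bot$.} The progression is a constant, and the claim is immediate in both cases. When $n>1$, satisfaction of a constant by $\tau_{>0}$ coincides with its truth value.
\item \emph{Conjunction and disjunction.} These follow directly from the induction hypothesis. When $n=1$, I additionally use that $\equiv\top$ commutes with $\wedge$ and $\vee$, because the progressions are then variable-free.
\item \emph{$\X$ and $\wX$.} These match the semantic clauses $i<n-1$ and $i=n-1$ directly.
\item \emph{$\psi_1\U\psi_2$ and $\psi_1\R\psi_2$.} Here one unfolds the recursive clause at position $0$, which is exactly the expansion used in the progression. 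Apply the induction hypothesis to $\psi_1$ and $\psi_2$, and note that the copy of $\psi$ itself is evaluated on $\tau_{>0}$ at position $0$, which is the same as evaluating $\tau$ at position $1$.
\end{compactitem}
Positional semantics coincide with suffix semantics: $\tau,j+1\models\chi$ iff $\tau_{>0},j\models\chi$. This follows by a trivial induction and is what justifies the last step.

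\textbf{Main lemma.} We proceed by induction on $i$.
\begin{compactitem}
\item For $i=0$, the statement is exactly the one-step lemma, with its first case corresponding to $|\tau|=1$.
\item For $i>0$ with $i<|\tau|-1$: by the induction hypothesis, $\tau\models\psi$ iff $\tau_{>i-1}\models\psi^+(\tau_{\le i-1})$. Apply the one-step lemma to the trace $\tau_{>i-1}$, which has length at least $2$ and is well-formed, with the formula $\psi^+(\tau_{\le i-1})$. This gives satisfaction by $\tau_{>i}$ of $\psi^+(\tau_{\le i})$.
\item For $i\ge|\tau|-1$: apply the induction hypothesis at $|\tau|-2$, or use the base case directly if $|\tau|=1$. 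Then use the $n=1$ case of the one-step lemma on the final letter, which yields $\psi^+(\tau)\equiv\top$.
\end{compactitem}
For this to work, progression must preserve NNF and stay within $\LL_P\cup\{\bot,\top\}$. This is immediate from the definition.

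\textbf{Main obstacle.} The main obstacle is the bookkeeping in the $\U$ and $\R$ cases of the one-step lemma, and in particular the $\R$ case with $n=1$. There the semantics says $\tau,0\models\psi_2$ holds and the clause "$i=n-1$" fires. The progression clause must be checked against this carefully: I expect it to correspond to requiring the progression of $\psi_2$, with $\psi_1$ optional. If the definition's $\psi_1^+$ versus $\psi_2^+$ roles look swapped, I will treat $\R$ in the form consistent with the semantics, where $\psi_2$ is the invariant. I will note that the argument is otherwise symmetric to the $\U$ case.
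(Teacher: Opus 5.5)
Your proposal is correct and follows essentially the paper's proof. Both establish a one-step claim by structural induction on $\psi$ and lift it by induction on $i$; the paper states the one-step claim at an arbitrary position $i$ of $\tau$ rather than at position $0$ with suffixes, which your shift observation makes equivalent. Both of your caveats are warranted, and the paper glosses over them: its proof silently uses $\last\in\tau(n-1)$, and it dismisses $\R$ as ``similar.'' The $\R$ clause of Def.~\ref{def:progression} is indeed swapped as written, since it yields $(\G a)^+(w)\equiv\bot$ and contradicts the paper's own example, so the lemma holds only for the corrected clause $\psi_2^+(w)\wedge(\psi_1^+(w)\vee\psi)$, resp.\ $\psi_2^+(w)$ when $\last\in w$.
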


\begin{restatable}[Correctness]{theorem}{correctness}
\label{thm:correctness}
Given inputs $\psi \in \LL_P$ and a trace $\tau \in (2^P)^*$, Alg.~\ref{alg:progression:monitoring}
outputs the monitoring state with respect to $\psi$ for every position of $\tau$, i.e., the sequence of all $s_i\in \RV$ such that $\psi \models \llbracket \tau_{\leq i} = s_i \rrbracket$ for all $0 \leq i < |\tau|$.
\end{restatable}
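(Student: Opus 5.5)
We argue by induction on the loop iterations, establishing an invariant on $\phi$. Let $\phi_0 = \mathit{NNF}(\psi)$ and let $\phi_{i+1} = \phi_i^+(\tau(i))$ be the value of $\phi$ after Line~\ref{line:progression:progress} in iteration $i$. Also let $\chi_i = \phi_i^+(\tau(i)\cup\{\last\})$. Unfolding the definition of repeated progression gives $\phi_{i+1} = \phi_0^+(\tau_{\leq i})$ and $\chi_i = \phi_0^+(\tau_{\leq i-1}\,(\tau(i)\cup\{\last\}))$. We will use two consequences of \lemref{progression}. The first is used to classify the current verdict:
\[ \tau_{\leq i} \models \psi \iff \chi_i \equiv \top. \]
The second says that, for every nonempty trace $\tau'$,
\[ \tau_{\leq i}\tau' \models \psi \iff \tau' \models \phi_{i+1}. \]
Since NNF preserves semantics, we may replace $\psi$ by $\phi_0$ throughout.

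For the first claim, let $\sigma$ be $\tau_{\leq i}$ with $\last$ added to its final state. Then $\sigma$ is well-formed, and $\last$ does not occur in $\psi$, so $\sigma \models \phi_0$ iff $\tau_{\leq i}\models\phi_0$. Apply \lemref{progression} to $\sigma$ with index $i = |\sigma|-1$. The suffix $\sigma_{>i}$ is empty, and the semantics is defined only for positions $0\le j<n$, so we read the lemma's second disjunct as false on an empty suffix. The lemma then gives $\sigma\models\phi_0$ iff $\phi_0^+(\sigma)\equiv\top$, and $\phi_0^+(\sigma)=\chi_i$.

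For the second claim, take a nonempty $\tau'$ and let $\sigma'$ be $\tau'$ with $\last$ added to its last state. The trace $\tau_{\leq i}\sigma'$ is well-formed. Apply the lemma at index $i$, which satisfies $i < |\tau_{\leq i}\sigma'|-1$. This yields $\tau_{\leq i}\sigma' \models \phi_0$ iff $\sigma' \models \phi_0^+(\tau_{\leq i}) = \phi_{i+1}$. Adding $\last$ is harmless on both sides because $\last$ does not occur in either formula. In the lemma, satisfaction of the progressed formula on the suffix is evaluated on a well-formed trace with $\last$ marked, so the argument should go through $\sigma'$ consistently.

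\textbf{Case analysis.} Suppose $\chi_i\equiv\top$. By the first claim, $\tau_{\leq i}\models\psi$, so the state is $\PS$ or $\CS$. It is $\CS$ iff some nonempty $\tau'$ has $\tau_{\leq i}\tau'\not\models\psi$; the empty continuation is excluded because $\tau_{\leq i}\models\psi$. By the second claim this holds iff some nonempty $\tau'$ has $\tau'\models\neg\phi_{i+1}$, that is, iff $\neg\phi_{i+1}$ is \LTLf-satisfiable, since \LTLf traces are nonempty. This is exactly Line~\ref{line:progression:sat}. The else case is symmetric with $\phi_{i+1}$ in place of $\neg\phi_{i+1}$ and gives Line~\ref{line:progression:vio}. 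Taken over all $i$, these cases give the claimed output sequence.

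\textbf{Main obstacle.} The step I expect to be delicate is the bookkeeping around $\last$ and the degenerate cases of \lemref{progression}. One issue is making sure that $\psi^+$ is only ever applied to formulas in $\LL_P\cup\{\top,\bot\}$ up to propositional simplification; if progression produces literally $\top\wedge\bot$, we should note that $\equiv$ is compatible with progression and with satisfaction. The other is ensuring that satisfiability of $\phi_{i+1}$ over traces without $\last$ coincides with satisfiability over well-formed traces with $\last$. I would settle the second point by a small auxiliary observation: for any formula not mentioning $\last$, satisfaction is invariant under adding or removing $\last$ in any state.
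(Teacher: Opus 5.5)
Your proof is correct and follows essentially the same route as the paper. Both establish that $\phi$ equals $\psi^+(\tau_{\leq i})$, then apply \lemref{progression} once to the $\last$-marked prefix to decide whether $\tau_{\leq i}\models\psi$, and once to extended traces $\tau_{\leq i}\tau'$ to reduce the existence of a violating (resp.\ satisfying) continuation to satisfiability of $\neg\phi$ (resp.\ $\phi$).

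Your version is in fact somewhat more careful than the paper's in three places:
\begin{itemize}
\item You mark $\last$ on the continuation, so that the lemma's well-formedness hypothesis genuinely holds.
\item You explicitly rule out the empty continuation.
\item You use the biconditional form of your first claim to justify the \emph{else} branch, which the paper only covers with ``the reasoning is similar''.
\end{itemize}
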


It can be observed that for safety and cosafety properties, Alg.~\ref{alg:progression:monitoring} simplifies slightly, as follows.

\begin{remark}
    \label{rem:safety}
\LTLf safety properties $\psi$ are characterized by the fact that every violating trace $\tau$ has a \emph{bad prefix}, i.e. there is some $i$, $0\,{\leq}\,i\,{<}\,|\tau|$, such that $\tau_{\leq i}\tau'$ violates $\psi$ for every trace $\tau'$~\cite{KupfermanV01}. Now let $\psi$ be such a property. Then for every $\tau$ such that $\tau \not\models \psi$, by definition $\tau$ \emph{permanently} violates $\psi$. Hence the monitoring state $\CV$ never occurs. It follows that the satisfiability check in Line~\ref{line:progression:vio} of Alg.~\ref{alg:progression:monitoring} can be omitted if a property $\psi$ is known to be in the safety fragment, e.g. because it satisfies syntactic criteria such as being in NNF and containing only the temporal operators $\R$ and $\wX$~\cite{Geatti2025SafetyAL}.
Dually, cosafety properties for which all satisfying traces have a \emph{good prefix} never assume the monitoring state $\CS$, so the satisfiability check in Line~\ref{line:progression:sat} can be skipped. 
\end{remark}

We briefly comment on computational complexity. Given a formula $\psi$,  a single progression step is linear in the size of $\psi$ but can double the size of the formula. Hence, for a trace $\tau$, the size of the progressed formula $\psi^+(\tau)$  is in $O(|\psi|\cdot 2^{|\tau|})$ \citep[Section 7.1]{BauerECAI2010}.
Since \LTLf satisfiability checking is in PSPACE, every satisfiability check in Alg.~\ref{alg:progression:monitoring} can require space  polynomial in $|\psi|$ and exponential in $|\tau|$. 
For the (infinite) set of all finite traces on $P$, the set of all corresponding progressed formulas from $\psi$, namely
$\{\psi^+(\tau')\mid \tau' \in (2^P)^*\}$,
is finite if suitable simplifications are applied%
, and has size in $\smash{O(2^{2^{|\psi|}})}$, as each of its elements corresponds to a state in the DFA construction~\cite[Thms. 2 and 3]{GiacomoFLVX022}. This means that, when monitoring an evolving trace, progressed formulas will eventually repeat. Hence, an implementation of Alg.~\ref{alg:progression:monitoring} can improve performance by caching 
the results of 
satisfiability checks.

\section{Arithmetic Progression Monitoring}
\label{sec:arithmetic:progression}

Next, we extend  progression monitoring to \LTLf with arithmetic.
To define a modified notion of progression for $\LL_\AA$ that uses information about trace length, we assume now a fresh boolean variable $\last\not\in V$, and call
a trace $\trace{\alpha_0 \dots \alpha_n}$ over $V\cup\{\last\}$ \emph{well-formed} if $\alpha_n(\last)=\top$ and  $\alpha_i(\last)=\bot$ for all $i$ with $0\leq i < n$.

\begin{definition}[Progression of constraints]
For a constraint $c \in \CC(V \cup V')$ and an assignment $\alpha$ with domain $V$,
let $\mathit{progress}(c, \alpha)$ denote the constraint $c\alpha\gamma$ for the substitution $\gamma$ with domain $V'$ that sets $\gamma(v') = v$.
\end{definition}

For instance, for $c = (x'>x)$ and $\alpha = \{x \mapsto 3\}$, we have $\mathit{progress}(c, \alpha) = x>3$.
The progression of a constraint over variables $V \cup V'$ is thus again a constraint $c'$, in which variables in $V$ are replaced by their value according to $\alpha$, and variables in $V'$ are replaced by their counterpart in $V$, so $c'$ is a constraint over variables $V$.

\begin{definition}[Arithmetic progression]
\label{def:progression:arith}
Let $\psi \in \LL_\AA$ be in NNF
and $\alpha$ an assignment with domain $V \cup \{\last\}$.
The progression $\psi^+(\alpha)$ of $\psi$ with respect to $\alpha$ is defined as follows:
\begin{compactitem}
\item
if $\psi \in \CC(V \cup V')$ then $\psi^+(\alpha) = \top$ if $\alpha(\last) = \top$ and $\psi$ contains $V'$, otherwise $\psi^+(\alpha) = \mathit{progress}(\psi, \alpha)$;
\item 
if $\psi = \psi_1 \vee \psi_2$ then $\psi^+(\alpha) = \psi_1^+(\alpha) \vee \psi_2^+(\alpha)$;
\item 
if $\psi = \psi_1 \wedge \psi_2$ then $\psi^+(\alpha) = \psi_1^+(\alpha) \wedge \psi_2^+(\alpha)$;
\item 
$(\X \psi_1)^+(\alpha)$ is $\psi_1$ if $\alpha(\last) = \bot$, and $\bot$ otherwise;
\item 
$(\wX \psi_1)^+(\alpha)$ is $\psi_1$ if $\alpha(\last) = \bot$, and $\top$ otherwise;
\item 
if $\psi = \psi_1 \U \psi_2$ then $\psi^+(\alpha) = \psi_2^+(\alpha) \vee (\psi_1^+(\alpha) \wedge \psi)$ if $\alpha(\last)=\bot$, and $\psi^+(\alpha) = \psi_2^+(\alpha)$ otherwise;
\item 
if $\psi = \psi_1 \R \psi_2$ then $\psi^+(\alpha) = \psi_1^+(\alpha) \wedge (\psi_2^+(\alpha) \vee \psi)$ if $\alpha(\last)=\bot$, and $\psi^+(\alpha) = \psi_1^+(\alpha)$ otherwise.
\end{compactitem}
\end{definition}

As in the propositional case, progression can be naturally extended to traces:
Given a trace $\tau =\trace{\alpha_0\alpha_1 \dots \alpha_{n-1}}$, let $\psi^+(\tau) = \psi^+(\alpha_0)^+(\alpha_1)^+ \cdots (\alpha_{n-1})$.

\begin{example}
\label{exa:arith:progression}
Consider $\psi= \G (x' > x) \wedge \F (x=2)$ and let 
$\alpha_0 = \{ x\,{\mapsto}\,0,\last\,{\mapsto}\,\bot\}$ and 
$\alpha_1 = \{ x\,{\mapsto}\,3,\last\,{\mapsto}\,\bot\}$.
We have 
$\psi^+(\alpha_0) = (x > 0) \wedge \G (x' > x) \wedge \F (x=2)$;  this formula is satisfiable. On the other hand,
$\psi^+(\alpha_1) = (x > 3) \wedge \G (x' > x) \wedge \F (x=2)$ is unsatisfiable.
For the respective assignments that indicate that the trace ends, 
$\psi^+(\{ x\,{\mapsto}\,0,\last\,{\mapsto}\,\top\}) = \psi^+(\{ x\,{\mapsto}\,3,\last\,{\mapsto}\,\top\}) = \bot$.
\end{example}

Progression monitoring for \aLTLf can now be defined similarly to the propositional case (cf. Alg.~\ref{alg:progression:monitoring}), with the only differences that the modified notion of progression is used (\defref{progression:arith}), and $\tau(i) \cup\{\last\}$ in Line~\ref{line:progression:progresslast} resp. $\tau(i)$ in Line~\ref{line:progression:progress} must be replaced by $\tau(i) \cup\{\last \mapsto \top\}$ resp. $\tau(i) \cup\{\last \mapsto \bot\}$).

\begin{example}
For $\psi$ from \exaref{arith:progression} and the trace
$\tau = \trace{\{ x \mapsto 0,\last \mapsto \bot\}\{ x \mapsto 3,\last \mapsto \bot\}}$,
from the observations in \exaref{arith:progression} it  follows that
Alg.~\ref{alg:progression:monitoring} outputs $\CV$ for instant $0$ and $\PV$ for instant 1. Indeed,
$\tau_{\leq 0} \models [\psi = \CV]$ but 
$\tau_{\leq 1} \models [\psi = \PV]$.
\end{example}

In order to prove correctness, we need the following property, similar as for the propositional case:

\begin{restatable}{lemma}{lemmaprogressionarith}
\label{lem:progression:arith}
Let $\psi \in \LL_\AA \cup \{\bot,\top\}$,  $\tau$ be a well-formed arithmetic trace over $V$, and $i\geq 0$. 
Then $\tau \models \psi$ iff either $i \geq |\tau|$ and $\psi^+(\tau)\equiv \top$, or 
$\tau_{>i} \models \psi^+(\tau_{\leq i})$.
\end{restatable}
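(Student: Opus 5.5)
I would prove the lemma by first establishing a one-step version and then iterating it by induction on $i$. The one-step claim is this. For $\psi \in \LL_\AA \cup \{\bot,\top\}$ in NNF and a well-formed arithmetic trace $\tau = \trace{\alpha_0 \dots \alpha_{n-1}}$:
\begin{itemize}
\item if $n = 1$, then $\tau \models \psi$ iff $\psi^+(\alpha_0) \equiv \top$;
\item if $n > 1$, then $\tau \models \psi$ iff $\tau_{>0} \models \psi^+(\alpha_0)$.
\end{itemize}
In the second case $\alpha_0(\last)=\bot$, and $\tau_{>0}$ is again well-formed.

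I would prove the one-step claim by structural induction on $\psi$. The cases $\top,\bot,\wedge,\vee$ are immediate. For $\X$, $\wX$, $\U$ and $\R$ the argument is the same as in the propositional \lemref{progression}. One unfolds the semantics of \defref{semantics} at instant $0$, using the case split on $\alpha_0(\last)$. Well-formedness makes $\alpha_0(\last)=\top$ equivalent to $n=1$. For $\U$, $\R$ one applies the induction hypothesis to the immediate subformulas and uses that $\tau,1 \models \psi$ is the same as $\tau_{>0} \models \psi$. When $n = 1$ the progressed formula is variable-free, so ``$\equiv \top$'' is meaningful by propositional simplification.

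The only genuinely new case is a constraint $c \in \CC(V\cup V')$.
\begin{itemize}
\item If $n = 1$ and $c$ contains $V'$, then $c$ is not well-defined at the last instant, so $\tau,0 \models c$ holds. Correspondingly, progression yields $\top$.
\item If $n = 1$ and $c$ mentions only $V$, then $\mathit{progress}(c,\alpha_0)$ is a ground constraint. It is $\equiv \top$ exactly when $[\tau,0](c)$ holds.
\item If $n > 1$, then $\mathit{progress}(c,\alpha_0)$ replaces each $v$ by $\alpha_0(v)$ and each $v'$ by $v$. Hence $[\tau,0](c)$ coincides with $[\tau_{>0},0](\mathit{progress}(c,\alpha_0))$, because $\tau(1)=\tau_{>0}(0)$. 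The progressed constraint has no primes, so it is well-defined at instant $0$ of $\tau_{>0}$.
\end{itemize}
Boolean variables $p$ behave like propositions.

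Next I would extend to traces by induction on $i$. For $i=0$ the statement is the one-step claim. For the step, I use that $\tau_{>i}$ is well-formed and that $\psi^+(\tau_{\leq i+1}) = (\psi^+(\tau_{\leq i}))^+(\alpha_{i+1})$, and apply the one-step claim to $\tau_{>i}$ and the formula $\psi^+(\tau_{\leq i})$.

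I expect the main obstacle to be the statement's boundary indexing. Read literally, ``$i \geq |\tau|$'' is awkward, since $\tau_{\leq i}$ and $\tau_{>i}$ are only defined for $i<|\tau|$. I would interpret the condition as referring to full progression, that is $\psi^+(\tau)$ when the prefix exhausts the trace. I would also interpret $\tau_{>i}$ as empty there and failing to satisfy any formula, mirroring \lemref{progression}. Alongside this, I need that progressed formulas stay in NNF and remain in $\LL_\AA \cup\{\bot,\top\}$, and that ``$\equiv$'' for ground constraints means evaluating them to truth values.
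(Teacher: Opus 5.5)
Your proposal is correct and follows essentially the paper's route. The paper also proves a one-step statement ($\star$) by structural induction: the constraint case is split exactly as you do (not well-defined at the last instant, ground, and lookahead), and all other cases are deferred to the propositional lemma; it then lifts ($\star$) by induction on $i$. The only cosmetic difference is that the paper states ($\star$) at an arbitrary instant $i$ of the full trace, whereas you state it at instant $0$ and apply it to suffixes. Your reading of the boundary condition as the case where the prefix exhausts the trace, i.e.\ $i = |\tau|-1$, is also what the paper's proof actually uses.
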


Correctness of progression monitoring for \aLTLf is stated next, it can be proven similarly as \thmref{correctness} but using \lemref{progression:arith}.

\begin{restatable}[Correctness]{theorem}{theoremprogressionarith}
\label{thm:correctness:progressive:arith}
On inputs $\psi \in \LL_\AA$ and an arithmetic trace $\tau$,
if Alg.~\ref{alg:progression:monitoring} terminates then it
outputs the monitoring state with respect to $\psi$ for every position of $\tau$.
\end{restatable}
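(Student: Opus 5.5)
We prove the statement by fixing an instant $i$ with $0 \leq i < |\tau|$ and arguing that, provided the satisfiability checks in iteration $i$ terminate, the value $s$ output there is the unique $s_i$ with $\tau_{\leq i} \models \llbracket \psi = s_i \rrbracket$. Throughout we write $\phi_{i-1}$ for the value of $\phi$ at the start of iteration $i$. A straightforward induction on $i$ shows that $\phi_{i-1} = \mathit{NNF}(\psi)^+(\hat\tau_{\leq i-1})$, where $\hat\tau$ annotates every assignment with $\last \mapsto \bot$. It then follows that $\phi_i = \mathit{NNF}(\psi)^+(\hat\tau_{\leq i})$, while $\chi_i$ is the progression of $\mathit{NNF}(\psi)$ over the prefix whose last assignment carries $\last \mapsto \top$. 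Since NNF conversion preserves semantics, we may replace $\psi$ by $\mathit{NNF}(\psi)$ everywhere.

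\textbf{Step 1: $\chi_i \equiv \top$ iff $\tau_{\leq i} \models \psi$.} Consider the well-formed trace $\sigma$ of length $i{+}1$ obtained from $\tau_{\leq i}$ by setting $\last$ to $\bot$ on the first $i$ assignments and to $\top$ on the last one. We apply \lemref{progression:arith} to $\sigma$ with index $i$. Then $\sigma_{>i}$ is the empty trace, which satisfies no formula because the semantics requires $0 \leq j < n$. Hence $\sigma \models \psi$ iff $\psi^+(\sigma) \equiv \top$, i.e., iff $\chi_i \equiv \top$. The index convention in the lemma ($i \geq |\tau|$ versus $|\tau|-1$) must be matched carefully here. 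A supporting auxiliary fact is that a progression step with $\last \mapsto \top$ produces a formula without temporal operators and with all constraints ground or replaced by $\top$/$\bot$. Such a formula is a closed boolean/arithmetic expression, so ``$\equiv \top$'' is decidable by evaluation. Satisfaction of $\psi$ by $\sigma$ coincides with satisfaction by $\tau_{\leq i}$, since $\last \notin V$ does not occur in $\psi$.

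\textbf{Step 2: extensions of $\tau_{\leq i}$ correspond to models of $\phi_i$.} Take any nonempty continuation $\tau'$, annotated well-formedly so that $\last$ is true only at its end, and apply \lemref{progression:arith} to $\tau_{\leq i}\tau'$ at index $i$. This gives $\tau_{\leq i}\tau' \models \psi$ iff $\tau' \models \phi_i$, and equally $\tau_{\leq i}\tau' \not\models \psi$ iff $\tau' \models \neg\phi_i$, reading $\neg$ semantically and converting to NNF. The empty continuation only returns $\tau_{\leq i}$ itself, which is already covered by Step 1. It follows that some continuation violates $\psi$ iff $\neg\phi_i$ is satisfiable, and some continuation satisfies $\psi$ iff $\phi_i$ is satisfiable.

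One subtlety arises because $\phi_i$ is a formula over $V$ without primes at top level, obtained from lookahead constraints. A model of $\phi_i$ is an arbitrary arithmetic trace, and any such trace is a legitimate continuation, so the correspondence is exact. The satisfiability notion must be \aLTLf satisfiability over nonempty traces, matching the trace definition with $n \geq 1$.

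\textbf{Step 3: combine.} In the branch $\chi_i \equiv \top$, Step 1 gives $\tau_{\leq i} \models \psi$, and Step 2 shows that the output is $\CS$ exactly when some continuation violates $\psi$ and $\PS$ otherwise. The else branch is dual and yields $\CV$ or $\PV$. The termination hypothesis guarantees that each satisfiability query returns a correct answer. The main obstacle is Step 2's use of the lemma: the lemma must apply both to $\psi$ and to its negation, and the empty-trace and index edge cases must be checked consistently with the weak evaluation of lookahead at the last instant. Otherwise the argument mirrors \thmref{correctness}.
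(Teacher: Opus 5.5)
Your proposal is correct and follows the paper's route. The paper proves this theorem by replaying the proof of \thmref{correctness} with \lemref{progression:arith} in place of \lemref{progression}: it identifies $\phi_i$ with the progression of the $\last{\mapsto}\bot$-annotated prefix, applies the lemma to the $\last$-terminated prefix to get the current verdict, and applies it to extensions $\tau_{\leq i}\tau'$ to tie the two satisfiability checks to the existence of satisfying or violating continuations. Your care about the index in \lemref{progression:arith} (the intended bound is $|\tau|-1$, as its proof shows) and about the empty continuation is warranted, but the lemma never needs to be applied to $\neg\phi_i$, since $\tau'\models\neg\phi_i$ iff $\tau'\not\models\phi_i$ holds directly by the semantics of negation.
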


\paragraph{Decidability.}
Satisfiability of arithmetic \LTLf is undecidable, so that \progmon need not terminate on $\LL_\AA$ in general. However, several relevant, satisfiable fragments of  \aLTLf are known~\cite{GeattiGGW23}.
These results can be used to identify fragments of arithmetic \LTLf for which progression monitoring is guaranteed to terminate, and hence monitoring is \emph{solvable} in the sense that it always produces a definitive verdict.
To this end, we recall special classes of constraints defined in the literature~\cite{DemriD07,GeattiGGW23}:
A \emph{monotonicity constraint} (MC) over rational variables $V$ has the form $p \odot q$ where $p,q\in {\mathbb Q\,{\cup}\,V}$
and $\odot$ is one of $=, \neq, \leq$, or $<$. An \emph{integer periodicity constraint} (IPC) over integer variables $V$ has the form 
 $x = y$, $x \odot d$ for $\odot \in \{=,\neq, <, >\}$, $x \equiv_k y + d$, or $x \equiv_k d$, for variables $x,y$ with domain $\mathbb Z$ and $k,d\in \mathbb N$.
 MC properties resp. IPC properties are \aLTLf properties where all constraints are MCs resp. IPCs. 
 \citeauthor{FMPW23} (\citeyear{FMPW23}) showed that automata-based monitoring is solvable for \aLTLf properties whose constraints are all MCs, or all IPCs. E.g.,$\G(t \geq t' \wedge t \leq 30)$ is an MC property.
One can prove that in these cases also progression monitoring is guaranteed to terminate:

 \begin{restatable}{theorem}{decidabilityMCIPC}
 \label{thm:decidability:progressive}
 Progression monitoring is guaranteed to terminate for MC and IPC properties.
 \end{restatable}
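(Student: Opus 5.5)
The proof will show that every satisfiability query issued by Alg.~\ref{alg:progression:monitoring} falls into a decidable class. Apart from these queries the algorithm does only finitely much work per trace instant: one NNF transformation, two progression steps (each a syntactic recursion over a finite formula, \defref{progression:arith}), and the test $\chi \equiv \top$. That test is decidable because $\chi$ comes from progressing with $\last \mapsto \top$. Every temporal subformula then collapses to $\top$, $\bot$, or the progression of its immediate subformulas. Every constraint either becomes $\top$ (if it contains $V'$) or has all its variables replaced by values from $\alpha$, so it is a variable-free arithmetic comparison that can be evaluated. Hence $\chi$ is a Boolean combination of ground atoms and can be evaluated directly. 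Termination therefore reduces to decidability of satisfiability for $\phi$ and $\neg\phi$, where $\phi = \psi^+(\tau_{\leq i})$ and $\psi$ is in NNF.

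The first key step is a closure lemma. If $\psi$ is an MC (resp.\ IPC) property, then every progression $\psi^+(\alpha)$ is again an MC (resp.\ IPC) property, possibly with $\top$ and $\bot$ occurring. This goes by induction on $\psi$ following \defref{progression:arith}; the only nontrivial case is the constraint case. For an MC $p \odot q$ with $p,q \in \mathbb Q \cup V \cup V'$, the operation $\mathit{progress}$ replaces unprimed variables by rational constants and primed ones by their unprimed counterparts. The result is again of the form $p \odot q$ with $p,q \in \mathbb Q \cup V$, hence an MC; if both sides are constants it is ground and equals $\top$ or $\bot$. For IPCs the cases $x = y$, $x \odot d$, $x \equiv_k y + d$ and $x \equiv_k d$ are handled similarly. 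For example, $x' \equiv_k y + d$ with $y \mapsto c$ becomes $x \equiv_k c + d$. Reducing the constant modulo $k$ keeps it in $\mathbb N$, and an equality $x = c$ with a negative constant $c \in \mathbb Z$ is still of the form $x \odot d$ once the constant side is allowed to be any integer. Ground IPCs evaluate to $\top$ or $\bot$. Iterating the lemma, each $\phi$ maintained by the algorithm is an MC (resp.\ IPC) property. Since the class is closed under negation followed by NNF (negating $\le$ or $<$ gives $<$ or $\le$ with the arguments swapped, $=$ and $\neq$ are dual, and $\neg(x \equiv_k d)$ is the finite disjunction of $x \equiv_k d'$ over $d' \neq d$, $0 \le d' < k$), $\neg\phi$ also stays in the class.

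The second step invokes the known decidability of satisfiability for MC and IPC properties of \aLTLf over finite traces~\cite{GeattiGGW23} (cf.\ \cite{DemriD07}). Combined with the closure lemma, every satisfiability check in Lines~\ref{line:progression:sat} and~\ref{line:progression:vio} terminates, and hence so does the whole algorithm on any finite trace. Correctness of the returned verdicts is already given by \thmref{correctness:progressive:arith}.

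The main obstacle I expect is the syntactic bookkeeping in the closure lemma. In particular, the exact constraint forms admitted in the cited decidability results must accommodate the constants produced by substitution (arbitrary rationals, possibly negative integers, offsets combined with $d$), and must allow either lookahead or the unprimed form that progression produces. If the published fragments are stated more narrowly, I would first normalise progressed constraints (reducing congruences modulo $k$, rewriting $x \odot c$ with $c < 0$ via equivalent forms, or treating constants as fresh rigid variables constrained by $\G$). I would then check that the decision procedure of \cite{GeattiGGW23} still applies, since its automata/graph argument depends only on finitely many constants being present.
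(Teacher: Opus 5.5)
Your proposal is correct and follows the paper's proof: you show that MC (resp.\ IPC) properties are closed under progression and under negation, and then invoke the decidability results of \cite{GeattiGGW23} (Thms.~5 and~6) for the satisfiability checks on $\phi$ and $\neg\phi$. Your extra bookkeeping fills in details the paper dismisses as ``easy to see'': the test $\chi \equiv \top$ reduces to evaluating a ground Boolean combination, and substituted constants (negative integers, offsets reduced modulo $k$) stay within the fragment. Note also that closure under negation is immediate, since NNF permits negated constraint atoms; this covers $\neg(x=y)$ for IPCs, which your positive rewriting does not handle.
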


\citeauthor{GeattiGGW23} (\citeyear{GeattiGGW23}) do in fact  prove decidability of satisfiability checking for all \aLTLf formulas whose \emph{iteration conditions} are either all MCs or all IPCs; these are a superset of MC resp. IPC properties. Here the set of iteration conditions of an \LTLf formula $\varphi$ in NNF consists of all constraints that occur in $\varphi_1$ for any subformula $\varphi_1 \U \varphi_2$ of $\varphi$, or in $\varphi_2$ for any subformula $\varphi_1 \R \varphi_2$ of $\varphi$.
 Dually, one can define the set of \emph{co-iteration} conditions of $\varphi$ as the set of constraints that occur in $\varphi_2$ for a subformula $\varphi_1 \U \varphi_2$ of $\varphi$, or in $\varphi_1$ for a subformula $\varphi_1 \R \varphi_2$ of $\varphi$.
 We can exploit these results to obtain new insights into solvability conditions for monitoring, but given the structure of the progression monitoring algorithm, we need to make sure that not only the properties $\phi_i$ in Alg.~\ref{alg:progression:monitoring}, but also their negations are in a decidable class. To this end, we exploit our observations about safety (Rem.~\ref{rem:safety}).

 \begin{restatable}{theorem}{decidabilitysafety}
 \label{thm:decidability:progressive:safety}
Progression monitoring terminates for every \aLTLf property $\psi$ in NNF that satisfies one of the following:
\begin{inparaenum}
\item $\psi$ is a co-safety property and its iteration conditions are either all MCs, or all IPCs;
\item $\psi$ is a safety property and its co-iteration conditions are all MCs, or all IPCs.
\end{inparaenum}
 \end{restatable}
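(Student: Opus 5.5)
Termination of Alg.~\ref{alg:progression:monitoring} reduces to the decidability of the satisfiability checks it performs: progression itself (\defref{progression:arith}) is a finite syntactic operation, and each loop iteration evaluates $\chi \equiv \top$ by simplifying a formula whose constraints have all been instantiated or replaced by truth values. The plan is to show that in each case every satisfiability check the algorithm actually needs is posed on a formula whose iteration conditions are all MCs or all IPCs. It then suffices to invoke the decidability result of \citeauthor{GeattiGGW23} for such formulas.

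The first step is a structural invariant of arithmetic progression. If $\psi$ is in NNF, then every $\phi = \psi^+(\tau_{\leq i})$ is again in NNF. Moreover, every temporal subformula $\varphi_1 \U \varphi_2$ or $\varphi_1 \R \varphi_2$ of $\phi$ is literally a subformula of $\psi$. This holds because progression only rewrites the top level, either keeping the $\U$/$\R$ subformula unchanged or progressing its arguments into Boolean combinations of constraints and subformulas of $\psi$. Hence the iteration conditions of $\phi$ are a subset of those of $\psi$, and likewise for co-iteration conditions. Top-level constraints created by $\mathit{progress}(c,\alpha)$ do not matter, since they are neither iteration nor co-iteration conditions. (If one wants a bound, substituting constants into an MC or IPC yields again an MC or IPC, e.g.\ $x' > x$ becomes $x > 3$.) This is routine by induction along \defref{progression:arith}.

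For case 1, $\psi$ is co-safety, so by Rem.~\ref{rem:safety} the state $\CS$ never occurs. The check on $\neg\phi$ in Line~\ref{line:progression:sat} can therefore be skipped and $\PS$ output directly. Only the check on $\phi$ in Line~\ref{line:progression:vio} remains, and by the invariant its iteration conditions are all MCs or all IPCs, so it is decidable. For case 2, $\psi$ is safety, so the check in Line~\ref{line:progression:vio} can be skipped and only $\neg\phi$ must be tested. Converting $\neg\phi$ to NNF dualizes $\U$ and $\R$: $\neg(\varphi_1\U\varphi_2)\equiv \neg\varphi_1 \R \neg\varphi_2$ and dually. The iteration conditions of $\mathit{NNF}(\neg\phi)$ are therefore exactly the negations of the co-iteration conditions of $\phi$, which are a subset of the co-iteration conditions of $\psi$.

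The main obstacle is checking that negated MCs and IPCs remain in the respective class. For MCs this is immediate: $\neg(p<q)$ is $q\leq p$, $\neg(p=q)$ is $p\neq q$, and so on. For IPCs, $\neg(x=y)$ and $\neg(x\equiv_k y+d)$ are not literally in the grammar. I would rewrite $\neg(x\equiv_k y+d)$ as the disjunction $\bigvee_{d'\neq d \bmod k} x \equiv_k y+d'$. For $x\neq y$, I would argue that the decision procedure of \citeauthor{GeattiGGW23} tolerates it, or else encode it via the order and equality fragment of IPCs. If that fails, I would restrict the statement for IPCs accordingly. With this settled, every executed check is decidable and the loop runs $|\tau|$ times, so the algorithm terminates.
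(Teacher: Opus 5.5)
Your core argument is the paper's proof. Progression preserves (co-)iteration conditions; you justify this more carefully than the paper, which only asserts it. Rem.~\ref{rem:safety} removes the check on $\neg\phi$ for co-safety and the check on $\phi$ for safety. In the safety case the $\U$/$\R$ duality turns the co-iteration conditions of $\phi$ into the iteration conditions of $\mathit{NNF}(\neg\phi)$. The problem is your final paragraph: it leaves the IPC half of case~2 conditional and proposes repairs that do not work.

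The obstacle you identify does not arise in the paper's setting. In $\LL_\AA$, NNF allows $\neg$ directly in front of a constraint, and (co-)iteration conditions are the constraints \emph{occurring} in the relevant arguments. So $\mathit{NNF}(\neg\phi)$ has exactly the co-iteration conditions of $\phi$ as iteration conditions, merely as negative literals. The paper applies the decidability result to such formulas directly, just as the proof of Thm.~\ref{thm:decidability:progressive} takes negations of MC/IPC properties to be MC/IPC properties. If you keep $\neg c$ as a literal, there is nothing to restrict.

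Your rewrites, by contrast, are unsound. A constraint with lookahead holds vacuously at the last instant, so negating it makes it \emph{strong}. Thus $\neg(x<x')$ is not $x'\le x$: for example, $\neg(x<x')\wedge\wX\bot$ is unsatisfiable, while $(x'\le x)\wedge\wX\bot$ holds on every one-instant trace. The same problem affects replacing $\neg(x'\equiv_k y+d)$ by $\bigvee_{d'}x'\equiv_k y+d'$. A correct rewrite would need a guard such as $\X\top\wedge(x'\le x)$.

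Finally, $x\neq y$ cannot be encoded via an ``order and equality fragment'', because IPCs contain no variable-to-variable order. It is not a positive Boolean combination of IPC atoms at all. Take $a\neq b$ exceeding all constants and congruent modulo every modulus involved. Then every IPC atom other than $x=y$ takes the same value at $(a,b)$ as at $(a,a)$. By monotonicity, any positive combination true at $(a,b)$ is also true at $(a,a)$, where $x\neq y$ fails. So along your rewriting route the IPC case genuinely fails; the fix is not to rewrite negated constraints at all.
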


By slightly adapting the results from \citeauthor{Geatti2025SafetyAL} \citeyear{Geatti2025SafetyAL}, one can e.g. show that all \aLTLf formulas in NNF that use only the temporal operators $\wX$ and $\R$, and where constraints with lookahead are not negated, are safety formulas (cf. the appendix). 
 Thus \thmref{decidability:progressive:safety} covers e.g. 
  the safety property $(x \equiv_3 0)\R (x+y=x')$ where the co-iteration condition is IPC, but also the co-safety property $(x<y) \U (\X(x+y=27))$.

\section{Evaluation}
\label{sec:experiments}

We implemented progression monitoring for both propositional and arithmetic \LTLf in C++, using BLACK~\cite{GeattiGMV24} for satisfiability checking.
The results of satisfiability checks are cached for later use.
For comparison, we implemented also propositional \dfamon using either Lydia~\cite{Lydia}, or Spot with MT\-DFAs~\cite{DuretLutzZPGV25} as backend to construct automata; as well as \dfamon with CGs
for the arithmetic case, using in addition Z3~\cite{Z3} for constraint handling.
For the propositional case, we  implemented in addition a combined approach where the DFA construction is done with a separate thread, while at the same time the progression approach is used to start monitoring the trace. Once the DFA construction is completed, monitoring is continued with the automaton.
All experiments were run on a system equipped with a 10-core 13th Gen Intel Core i7-1365U processor and 32 GB of main memory.

\paragraph{Propositional \LTLf.}
We consider two complementary benchmark sets: formulas from the synthesis competition as rather syntactic problems (SC), and declarative specifications of business processes, an area where monitoring is very relevant in practice~\cite{CCGD26} (PM).
For (SC), we took all formulas of the synthesis competition 2025~\cite{syntcomp} that could be transformed to \LTLf (1744 formulas) and generated for each formula three traces of length 500: a random one, one by walking along the DFA (if it was computable by Lydia), and one by subsequently applying progression to the formula and choosing an assignment that satisfies the current requirements. In this way, 4471 formula/trace pairs were obtained.
(PM) was constructed by taking all event logs from the website of the IEEE Task Force on Process Mining,\footnote{\url{https://www.tf-pm.org/resources/logs}} and used Declare4Py~\cite{DonadelloRMS22} to automatically mine \Declare constraints from the event log. The conjunction of these constraints, transformed into \LTLf, was monitored against 100 random traces from the respective event log, resulting in 3637 formula/trace pairs.

We ran all monitoring algorithms with a timeout of 180 seconds for each formula/trace pair, determine the monitoring state for each trace prefix. 
All techniques produced the same monitoring state.
Tab.~\ref{tab:propositional} reports the results of \dfamon based on automata generated by Lydia or Spot, \progmon, and the combined approach (with Spot) on both benchmark sets. The median times show that \dfamon with Spot is typically faster than \dfamon with Lydia, and both are faster than progression. The average time is often higher for those monitors that experience fewer timeouts, solving more hard problems.
\commenttkbox[noinline]{Description of table moved to the table caption. Can we say anything more about it there or here?}
\begin{table}[t]
\begin{footnotesize}
\begin{tabular}{@{}l@{\:}|@{\:}c@{\:}c@{\:}c@{\:}|@{\:}c@{\:}c@{\:}c@{\:}|@{\:}c@{\:}c@{\:}c@{\:}|@{\:}c@{\:}c@{\:}c@{}}
&
\multicolumn{3}{@{\:}c@{\:}|@{\:}}{\textbf{Lydia}} &
\multicolumn{3}{@{\:}c@{\:}|@{\:}}{\textbf{Spot}} &
\multicolumn{3}{@{\:}c@{\:}|@{\:}}{\textbf{progression}} &
\multicolumn{3}{@{\:}c}{\textbf{combined}} \\
&
suc & avg & med &
suc & avg & med &
suc & avg & med &
suc & avg & med \\
\hline
SC &
3506 & 1.18 & 0.05 &
3550 & 2.11 & 0.01 &
3710 & 2.96 & 0.07 &
4432 & 2.03 & 0.12 \\
\hline
PM & 
2621 & 0.80 & 0.09 &
2621 & 2.11 & 0.01 &
3535 & 2.71 & 0.19 &
3577 & 4.42 & 0.11
\end{tabular}
\end{footnotesize}
\caption{Experiments for propositional \LTLf \added{on the SC and PM benchmarks (with 4471 and 3637 instances, resp.)}.
\added{
For each technique, the number of successes, i.e., non-timeouts, is reported (suc), as well as the average (avg) and median (med) monitoring time for successful runs in seconds.
}
}
\label{tab:propositional}
\end{table}
\begin{figure}
\resizebox{.48\textwidth}{!}{
\begin{tikzpicture}
\begin{axis}[
	width=.55\textwidth,
	height=.3\textwidth,
	xlabel={time (seconds)},
	ylabel={number of instances},
	legend style={at={(0.9,0.7)},anchor=east, font=\footnotesize},
    enlarge x limits = 0.0,
    enlarge y limits = 0.05,
]
\addplot+[orange, no marks] table {DFA_cactus_SC.dat};
\addlegendentry{Lydia}

\addplot+[red, no marks] table {spot_cactus_SC.dat};
\addlegendentry{Spot}

\addplot+[blue, no marks] table {prog_cactus_SC.dat};
\addlegendentry{progression}

\addplot+[darkgreen, no marks] table {spot_combined_cactus_SC.dat};
\addlegendentry{combined}
\end{axis}
\end{tikzpicture}
}
\\
\resizebox{.48\textwidth}{!}{
\begin{tikzpicture}
\begin{axis}[
	width=.55\textwidth,
	height=.3\textwidth,
	xlabel={time (seconds)},
	ylabel={number of instances},
	legend style={at={(0.9,0.62)},anchor=east, font=\footnotesize},
    enlarge x limits = 0.0,
    enlarge y limits = 0.05,
]
\addplot+[orange, no marks] table {DFA_cactus_PM.dat};
\addlegendentry{Lydia}

\addplot+[red, no marks] table {spot_cactus_PM.dat};
\addlegendentry{Spot}

\addplot+[blue, no marks] table {prog_cactus_PM.dat};
\addlegendentry{progression}

\addplot+[darkgreen, no marks] table {spot_combined_cactus_PM.dat};
\addlegendentry{combined}

\end{axis}
\end{tikzpicture}
}
\caption{Number of instances solved within time limit for the SC (above) and PM benchmarks (below).}
\label{fig:cactus}
\end{figure}
\figref{cactus} shows cactus plots or both benchmark sets, counting how many instances were solved within a time budget. It illustrates that \progmon solves somewhat more benchmarks that \dfamon, but the combined approach (using the DFA/Spot monitor and \progmon in parallel) handles even more problems. \figref{scatter} compares the monitoring time \dfamon with Spot and \progmon on those SC problems where both succeed: the latter typically requires more time, on average twice as much (a respective comparison for (PM) is similar). 
However, for large formulas, \dfamon often fails to construct the automaton. \figref{nim} shows the monitoring time for the formulas describing the two-player game Nim from the (SC) set, where the times required by \dfamon increase fast and the tool times out for $n > 8$.
Timeouts were counted if the technique fails to process the entire trace within 180 seconds. For DFA monitoring timeouts almost always occur during DFA construction (in general, more than 90\% of the total time is spent on automata construction). 
Moreover, interestingly in all cases where \dfamon succeeded but \progmon timed out, the final monitoring state was either $\PV$ or $\PS$. This is likely due to the fact that in these cases BLACK must prove \emph{un}satisfiability, which is more difficult than proving satisfiability with the implemented tableaux method.
In progression monitoring, on average 62\% of the time was spent on satisfiability checks.
For the combined approach, in 60\% (SC) resp. 51\% (PM) of all benchmarks, the phase of DFA monitoring was reached, i.e., the remaining cases were solved by \progmon alone.

\begin{figure}
\begin{tikzpicture}
\begin{axis}[
	width=.48\textwidth,
	height=.3\textwidth,
	xmode=log,
	ymode=log,
	xlabel={DFA/Spot monitoring},
	ylabel={progression monitoring},
	legend pos=north west,
    enlarge x limits = 0.03,
    enlarge y limits = 0.0,
	legend style={at={(0.95,0.7)},anchor=east, font=\footnotesize}
]

\addplot+[gray,  domain=6e3:6e8, no markers] {x};
\addlegendentry{$y=x$}

\addplot+[red, only marks, mark=x, mark size=1.2pt] table {dfa_vs_prog_SC_all.dat};




\end{axis}
\end{tikzpicture}
\caption{Comparison of the computation time in $\mu$s required by \dfamon and \progmon on SC benchmarks.}
\label{fig:scatter}
\end{figure}

\begin{figure}
\begin{tikzpicture}
\begin{axis}[
	width=.48\textwidth,
	height=.2\textwidth,
	xmode=linear,
	ymode=log,
	xlabel={parameter},
	ylabel={time ($\mu$s)},
	legend pos=north west,
    enlarge x limits = 0.05,
    enlarge y limits = 0.05,
	legend style={at={(0.95,0.7)},anchor=east, font=\footnotesize}
]

\addplot+[red, mark size=2pt] table {nim_time_spot.dat};
\addlegendentry{DFA/Spot}
\addplot+[blue,  mark size=2pt] table {nim_time_prog.dat};
\addlegendentry{progression}
\end{axis}
\end{tikzpicture}
\caption{Monitoring parametric formulas describing the two-player game Nim.}
\label{fig:nim}
\end{figure}

\paragraph{Arithmetic \LTLf.} 
We are not aware of any comprehensive set of \aLTLf benchmarks, and hence
collected formulas from the literature: (A) all examples from the tool \emph{ada}~\cite{FMPW23}, (B) example instances from all formula families of the BLACK repository,\footnote{\url{https://github.com/black-sat/black}} and (C) examples from the Lola repository, a total of 36 formula/trace pairs.
We compare arithmetic progression monitoring with the tool \textit{ada}~\cite{FMW22a}. However, since the latter is written in Python and uses a naive automata constuction, we re-implemented the approach in C++ using Lydia DFAs.  All benchmarks were run with a timeout of 180 seconds.
A detailed table can be found in the appendix, Tab.~\ref{tab:arithmetic}.
In summary, \progmon succeeds to monitor 32, \dfamon 19, and \textit{ada} 10 problems, and there are no problems where \dfamon or \textit{ada} succeeds but \progmon does not. On examples where both succeed, \progmon requires on average 28\% of the time needed by \dfamon (resp. 12\% of the time needed by \textit{ada}).
On many examples, the automata-based approach times out during DFA/CG construction, whereas progression monitoring succeeds. Moreover, for all benchmarks solved by both approaches, progression monitoring is faster. For the automata-based approach, the DFA/CG construction requires more than 99\% of the overall time.

\section{Discussion and Conclusions}
\label{sec:discussion}

To conclude, we compare automata-based to progression-based monitoring from multiple viewpoints.

For monitoring of propositional \LTLf formulas, our evaluation suggests that if the DFA can be computed, \dfamon is more efficient. However, for many, in particular large, formulas as they appear, e.g., in monitoring of business processes (cf. PM benchmark), the DFA computation did not terminate with either automata construction library, so that \progmon is clearly preferrable. 
If a specification is monitored repeatedly against multiple traces, it may be worth investing more time in the monitor construction, though.
In any case, our combined monitoring approach can serve as a useful tradeoff, as it parallelizes the DFA computation and the progression approach, jumping to the faster automata-based approach as soon as possible. The evaluation indeed confirms its ability to solve considerably more benchmarks than either approach alone, and to be about 20\% faster than \progmon on average.
\sarahtodo{find percentage}

For the arithmetic case, the advantage of the progression-based approach is much more prominent: \dfamon in fact often fails to terminate in the preprocessing phase, while the lazy and trace-tailored approach of \progmon yields fewer timeouts and is faster in all benchmarks.

In terms of computational complexity, in the propositional case, the DFA construction needed by automata-based monitoring is doubly exponential in the size of the formula, while a subsequent monitoring step is comparatively cheap if the constructed DFA can be kept in memory and smart data structures such as MTDFAs~\cite{Mona,DuretLutzZPGV25} are used.
In contrast, as discussed above the \LTLf formulas constructed by \progmon can grow exponentially, so that each step may require exponential space (resp.~doubly-exponential time). However, our experiments suggest that this theoretical bound is rarely reached in practice.
As Rem.~\ref{rem:safety} shows, for co-safety and safety properties one satisfiability check can be omitted, which may improve the performance of \progmon in practice.

In the arithmetic case, monitoring is undecidable, and there are differences between \dfamon and \progmon w.r.t-~decidable fragments. On the one hand, while \dfamon is guaranteed to terminate for bounded lookback, MC, and IPC properties~\cite{FMPW23}, \progmon is known to terminate only for the latter two. On the other hand, \progmon terminates for all co-safety and safety properties where all iteration (resp., co-iteration) conditions are MCs or IPCs (cf.~\thmref{decidability:progressive:safety}), while there is no respective result for \dfamon.

Finally, the two approaches differ wrt. explainability: When monitoring complex properties such as our PM benchmarks that are conjunctions of numerous \Declare patterns, it may be of interest to understand which sub-property is violated or satisfied. While this is hard to achieve with the monolithic \dfamon, progression monitoring  can return the monitoring state of each conjunct~\cite{BauerFMSD2016}, and hence provide a more fine-grained response.

As future work, when monitoring multiple traces against the same specification, \progmon could be made more efficient by caching satisfiability checks and progression results \emph{across}, and not only within, traces. This would essentially amount to a lazy construction of a DFA.
Second, while we showed that progression monitoring is a decision procedure for the MC and IPC fragments, it is subject to further investigation whether this also holds for bounded lookback.


\bibliography{references}

\newpage
\appendix
\section{Appendix}

\subsection{Proofs}
\lemmaprog*
\begin{proof}
We show first the following statement for a single progression step ($\star$):
For $\psi$, $\tau$ with length $n$, and $0 \leq i < n$, $\tau,i \models \psi$ iff either $i = n-1$ and $\psi^+(\tau(i))\equiv \top$, or 
$\tau, i+1 \models \psi^+(\tau(i))$.
This is proven by induction on $\psi$, for arbitrary $i$. 
\begin{compactitem}
\item
If $\psi=\top$ or $\psi=\bot$ then $\psi^+(\tau(i))=\psi$, and the statement clearly holds.
\item
Let $\psi=p$ for $p \in P$. If $\tau,i \models p$, i.e., $p \in \tau(i)$, then $\psi^+(\tau(i))=\top$ and the statement holds.
Conversely, if $i < n-1$ and $\tau, i+1 \models \psi^+(\tau(i))$, or $\psi^+(\tau(i))\equiv\top$ then we must have $p\in \tau(i)$, so $\tau,i \models \psi$.
\item
The case of $\psi=\neg p$ for $p \in P$ is  similar: 
If $\tau,i \models \psi$ then $p \not\in\tau(i)$, so $\psi^+(\tau(i))=\top$ by definition and the statement holds.
Conversely, if either $\tau, i+1 \models \psi^+(\tau(i))$ and $i<n-1$, or $\psi^+(\tau(i))\equiv\top$ then we must have $p \not in \tau(i)$, so $\tau,i \models \psi$.
\item
For $\psi=\psi_1 \wedge \psi_2$, $\tau,i \models \psi$ holds iff $\tau,i \models \psi_1$ and $\tau,i \models \psi_2$, which is by the induction hypothesis the case iff either 
$i < n-1$ and for both $j\in \{1,2\}$ we have
$\tau, i+1 \models \psi_j^+(\tau(i))$, or $i= n-1$ and $\psi_j^+(\tau(i))\equiv\top$.
This former case is equivalent to 
$i < n-1$ and $\tau, i+1 \models \psi_1^+(\tau(i)) \wedge \psi_2^+(\tau(i)) = \psi^+(\tau(i))$, and the latter is equivalent to
 $\psi_1^+(\tau(i)) \wedge \psi_2^+(\tau(i)) = \psi^+(\tau(i))\equiv\top$ and $i = n-1$.
\item
For $\psi=\psi_1 \vee \psi_2$, $\tau,i \models \psi$ holds iff $\tau,i \models \psi_1$ or $\tau,i \models \psi_2$, which is by the induction hypothesis the case iff for some $j\in \{1,2\}$ we have
$\tau, i+1 \models \psi_j^+(\tau(i))$ and $i<n$, or $i=n$ and $\psi_j^+(\tau(i))\equiv\top$.
This former case is equivalent to 
$i < n-1$ and $\tau, i+1 \models \psi_1^+(\tau(i)) \vee \psi_2^+(\tau(i)) = \psi^+(\tau(i))$, and the latter is equivalent to
 $\psi_1^+(\tau(i)) \vee \psi_2^+(\tau(i)) = \psi^+(\tau(i))\equiv\top$ and $i = n-1$.
\item
For $\psi=\X \psi_1$, suppose first $\tau,i \models \psi$, so 
$i<n-1$ and $\tau,i+1 \models \psi_1$.
Since $\tau$ is well-formed,  $\last \not\in \tau(i)$, so $\tau,i+1 \models \psi_1$ and $\psi^+(\tau(i))=\psi_1$.
For the other direction, if $i<n-1$ and $\tau,i+1 \models \psi_1$ then by well-formedness $\last \not\in \tau(i)$, so clearly $\tau,i \models \psi$ and $\psi_1 = \psi^+(\tau(i))$.
If $i = n-1$ then by well-formedness $\last \in \tau(i)$, so $\psi^+(\tau(i))\equiv \top$ is impossible.
\item
For $\psi=\wX \psi_1$, suppose first $\tau,i \models \psi$, so either
$i<n-1$ and $\tau,i+1 \models \psi_1$, or $i=n-1$. 
In the former case, the claim holds because by well-formedness of $\tau$ we have $\last \not\in \psi^+(\tau(i))$ and $\psi^+(\tau(i)) = \psi_1$.
In the latter case, by well-formedness  $\last \in \psi^+(\tau(i))$, so by definition $\psi^+(\tau(i)) = \top$.
For the other direction, if $i<n-1$ and $\tau,i+1 \models \psi_1$ then by well-formedness $\last \not\in \tau(i)$, so clearly $\tau,i \models \psi$ and $\psi_1 = \psi^+(\tau(i))$.
If $i = n-1$ then by definition $\tau,i \models \psi$.
\item
Consider now $\psi = \psi_1 \U \psi_2$, where we have $\psi^+(\tau(i)) = $
Suppose first $\tau,i \models \psi$.
Then either (1) $\tau,i \models \psi_2$, or (2) $i < n-1$, $\tau,i \models \psi_1$, and $\tau,i+1 \models \psi$. In case (1), by the induction hypothesis either (1a) $i = n-1$ and $\psi_2^+(\tau(i)) \equiv \top$, or (1b) $i < n-1$ and $\tau,i+1 \models \psi_2^+(\tau(i))$. In  case (1a), $\last \in \tau(i)$, so $\psi^+(\tau(i)) = \psi_2^+(\tau(i))$, hence $\psi^+(\tau(i)) \equiv \top$ and the claim holds. In case (1b),  $\last \not \in \tau(b)$, so
$\psi^+(\tau(i)) = \psi_2^+(\tau(i)) \vee (\psi_1^+(\tau(i)) \wedge \psi)$. Since we have $\tau,i+1 \models \psi_2^+(\tau(i))$, also $\tau,i+1 \models \psi^+(\tau(i))$ holds.
For case (2), by the induction hypothesis and the fact that $i < n-1$, we have $\tau,i+1 \models \psi^+_1$. By well-formedness we have $\last \not \in \tau(b)$, so 
$\psi^+(\tau(i)) = \psi_2^+(\tau(i)) \vee (\psi_1^+(\tau(i)) \wedge \psi)$, and $\tau,i+1 \models \psi^+$.

For the other direction, suppose first (1) $i = n-1$ and $\psi^+(\tau(i)) \equiv \top$.
By well-formedness $\last \in \tau(i)$, so $\psi^+(\tau(i)) = \psi_2^+(\tau(i))$. By the induction hypothesis, $\tau, i \models \psi_2$, so $\tau, i\models \psi$. Otherwise (2), $i < n-1$ and $\tau, i+1 \models \psi^+(\tau(i))$, so by well-formedness 
$\psi^+(\tau(i)) = \psi_2^+(\tau(i)) \vee (\psi_1^+(\tau(i)) \wedge \psi)$. Hence either
(2a) $\tau, i+1 \models \psi_2^+(\tau(i))$ or $\tau, i+1 \models \psi_1^+(\tau(i)) \wedge \psi$.
In case (2a), by the induction hypothesis we have $\tau, i \models \psi_2$ and hence $\tau, i \models \psi$. In case (2b), both $\tau, i+1 \models \psi_1^+(\tau(i))$ and $\tau, i+1 \models\psi$, so by the induction hypothesis we have $\tau, i \models \psi_1$, and by Def.~\ref{def:semantics} also $\tau, i \models \psi$.
\item The case for $\R$ is similar.
\end{compactitem}
We now show that for $0 \leq i < n$, it holds that $\tau \models \psi$ iff either $i \geq n-1$ and $\psi^+(\tau)\equiv \top$, or $i < n-1$ and
$\tau_{> i} \models \psi^+(\tau_{\leq i})$.
The latter is equivalent to $\tau, i+1 \models \psi^+(\tau_{\leq i})$.

$(\Longrightarrow)$
By induction on $i$.
In the base case  $i=0$ and  $\tau = \trace{\tau(i)}$, so the claim follows from ($\star$).
Let $i > 0$. 
By the induction hypothesis, we get either $i-1 \geq n-1$ and $\psi^+(\tau)\equiv \top$, or $i-1 < n-1$ and
$\tau_{> i-1} \models \psi^+(\tau_{\leq i-1})$.
In the first case, if $i \geq n-1$ then also $i+1 > n-1$ and $\psi^+(\tau)\equiv \top$ holds.
Otherwise, we have $i < n-1$ and $\tau, i \models \psi^+(\tau_{\leq i-1})$, so by ($\star$) we have 
$\tau, i+1 \models \psi^+(\tau_{\leq i-1})^+(\tau(i))$, that is, $\tau_{> i} \models \psi^+(\tau_{\leq i})$.

$(\Longleftarrow)$ By induction on $i$.
In the base case, $i=0$.
If $i=n-1$ and $\psi^+(\tau_{\leq i}) = \psi^+(\tau(0))\equiv \top$, we obtain $\tau, i \models \psi$ from ($\star$).
If $i < n-1$ and $\tau_{> i} \models \psi^+(\tau(i))$, so
$\tau, i+1 \models \psi^+(\tau(i))$ then $\tau, i \models \psi$ holds by ($\star$) as well.
For $i > 0$, we again distinguish two cases.
First, let $i=n-1$ and $\psi^+(\tau_{\leq i}) \equiv \top$,
so $\psi'^+(\tau(n-1)) \equiv \top$ for $\psi' = \psi^+(\tau(0))^+\dots (\tau(i-1))= \psi^+(\tau_{\leq i-1})$. By ($\star$), $\tau, i \models \psi^+(\tau_{\leq i-1})$, and by the induction hypothesis, $\tau \models \psi$.
Second, suppose $i < n-1$ and $\tau_{> i} \models \psi^+(\tau_{\leq i})$, i.e., $\tau, i+1 \models \psi^+(\tau_{\leq i}) = \psi'^+(\tau(i))$ for
$\psi' = \psi^+(\tau(0))^+\dots (\tau(i-1))$.
By ($\star$), $\tau, i \models \psi' =  \psi^+(\tau_{\leq i-1})$, and by the induction hypothesis, $\tau \models \psi$.
\qedhere
\end{proof}

\correctness*
\begin{proof}
Let $\tau= \trace{w_0w_1 \dots w_{n-1}}$, for $n\geq 0$.
For any $i$, let $\tau_{\leq i}^\last = \trace{w_0, \dots, w_{i-2}, w_i \cup\{\last\}}$.
Let $\chi_i$ resp. $\phi_i$ be the formulas obtained in Line~\ref{line:progression:progresslast} resp. \ref{line:progression:progress} of Alg.~\ref{alg:progression:monitoring} in iteration $i$, for $0\leq i < n$.
We show by induction on $i$  that 
$\phi_{i} = \psi^+(\tau_{\leq i})$, and the procedure outputs $s_i$ such that 
$\tau_{\leq i} \models \llbracket \psi = s_i\rrbracket$.
In iteration 0, $\chi_0 = \psi^+(\trace{w_0 \cup \{\last\}})$ and 
$\phi_0 = \psi^+(\trace{w_0})$.
If $\chi_0 \equiv \top$, then by \lemref{progression} applied to $\psi$, trace $\tau_{\leq 0}^\last = \trace{w_0 \cup\{\last\}}$ and $i=0$ we have $\tau_{\leq 0}^\last \models \psi$ and hence 
$\tau_{\leq i} \models \psi$ as $\last$ does not occur in $\psi$.
So $s_0$ must be $\CS$ or $\PS$.
\begin{inparaenum}[(1)]
\item
Suppose $\neg \phi_0$ is satisfied by some trace $\tau'$, so $\tau' \not \models \phi_0$.
By \lemref{progression} applied to trace $\psi$, $\tau_{\leq{0}}\tau'$, and $i=0$, using
$(\tau_{\leq{0}}\tau')_{> 0} = \tau' \not\models \varphi_0$
we conclude $\tau_{\leq{0}}\tau' \not \models \psi$, so $s_{0}=\CS$.
\item
If $\neg \phi_0$ is unsatisfiable, so $\phi_0$ is valid, so every trace $\tau'$ satisfies $\tau' \models \phi_{0}$. 
By \lemref{progression} applied to $\psi$, $\tau_{\leq{0}}\tau'$, and $i=0$, this means that $\tau_{\leq{0}}\tau' \models \psi$, so $s_{0}=\PS$.
\end{inparaenum}
If $\chi_0 \not \equiv \top$, the reasoning is similar.

In iteration $i+1$, we assume by the induction hypothesis that $\phi_{i} = \phi^+(\tau_{\leq i})$.
Therefore, we have $\phi_{i+1} = \phi_{i}^+(\tau(i+1)) = \psi^+(\tau_{\leq {i+1}})$. For  $\tau^\last = \tau_{\leq i}\trace{ \tau(i+1) \cup \{\last\}}$, we similarly have $\chi_{i+1} = \psi^+(\tau^\last)$.
Suppose $\chi_{i+1} \equiv \top$.
By \lemref{progression} applied to $\psi$, $\tau^\last$, and $i+1$, we have that $\tau^\last \models \psi$, and hence $\tau \models \psi$, so the monitoring state is $\CS$ or $\PS$.
\begin{inparaenum}[\itshape (i)]
\item
Suppose $\neg \phi_{i+1}$ is satisfiable by some trace $\tau'$, so $\tau' \not \models \phi_{i+1}$.
By \lemref{progression} applied to trace $\psi$, $\tau_{\leq{i+1}}\tau'$, and $i+1$, using
$(\tau_{\leq{i+1}}\tau')_{> 1} = \tau' \not\models \varphi_{i+1}$
we conclude $\tau_{\leq{i+1}}\tau' \not \models \psi$, so $s_{i+1}=\CS$.
\item
If $\neg \phi_{i+1}$ is unsatisfiable, $\phi_{i+1}$ is valid, so every trace $\tau'$ satisfies $\tau' \models \phi_{i+1}$. 
By \lemref{progression} applied to $\psi$, $\tau_{\leq{i+1}}\tau'$, and $i+1$, this means that $\tau_{\leq{i+1}}\tau' \models \psi$, so $s_{i+1}=\PS$.
\end{inparaenum}
If $\chi_{i+1} \not \equiv \top$, the reasoning is similar.
\end{proof}

For the arithmetic case, we extend the equivalence relation $\equiv$ such that all formulas in the arithmetic theory without variables are evaluated, e.g. $(1 < 2) \equiv \top$ and $(3+7 = 11) \equiv \bot$.

\lemmaprogressionarith*
\begin{proof}
As for the propositional case, we show first ($\star$):
For $\psi$, $\tau$ with length $n$, and $0 \leq i < n$, $\tau,i \models \psi$ iff either $i = n-1$ and $\psi^+(\tau(i))\equiv \top$, or 
$\tau, i+1 \models \psi^+(\tau(i))$.
This is proven by induction on $\psi$, for arbitrary $i$. 
\begin{compactitem}
\item
Let $\psi$ be an arithmetic constraint $e_1 \odot e_2$. 
Suppose first that $\tau,i \models \psi$.
If $\psi$ is not well-defined, we have $i = n-1$ and $\psi$ contains $V'$. Then $\psi^+(\tau(i))\equiv \top$ by definition.
If $\psi$ is well-defined and does not contain $V'$, $\progress(\psi, \tau(i)) = \psi\tau(i) \equiv \top$, so the claim holds. If $\psi$ is well-defined and contains $V'$, $\progress(\psi, \tau(i)) = \psi\tau(i)\gamma$. Since $\tau,i \models \psi$, the assignment $\delta$ with domain $V \cup V'$ that sets $\delta(v) = \tau(i)$ if $v\in V$ and $\delta(v') = \tau(i+1)$ if $v'\in V'$ satisfies $\psi$. Hence $\tau, i+1 \models \progress(\psi, \tau(i))$.

For the other direction, suppose first that $i = n-1$ and $\psi^+(\tau(i))\equiv \top$. Then $\psi$ is either not well-defined at $i$, so $\tau,i \models \psi$, or it does not contain $V'$ In the latter case, $\progress(\psi, \tau(i)) = \psi\tau(i)\gamma \equiv \top$ implies that $\tau(i)$ satisfies $\psi$, hence $\tau,i \models \psi$.
If $i < n-1$ and $\tau, i+1 \models \psi\tau(i)\gamma$ then the assignment $\delta$ with domain $V \cup V'$ that sets $\delta(v) = \tau(i)$ if $v\in V$ and $\delta(v') = \tau(i+1)$ if $v'\in V'$ satisfies $\psi$, so $\tau,i \models \psi$.
\item All other cases are proven as in the proof of Lem.~\ref{lem:progression}.
\qedhere
\end{compactitem}
\end{proof}

\theoremprogressionarith*
\begin{proof}
This theorem is proven exactly like Thm.~\ref{thm:correctness}, replacing all invocations of Lem.~\ref{lem:progression} by Lem.~\ref{lem:progression:arith}.
\end{proof}

\decidabilityMCIPC*
 \begin{proof}
It is easy to see that for MC properties (1) also all properties in the progression space are MC properties, and 
(2) also all negations of MC properties are MC properties. The claim follows by decidability of satisfiability checking for MC proprties~\cite[Thm.~5]{GeattiGGW23}.
A similar reasoning applies to IPC properties, using \cite[Thm.~6]{GeattiGGW23}.
\end{proof}

\decidabilitysafety*
 \begin{proof}
First, note that in an input property $\psi$ satisfies the conditions of the theorem about (co)iteration conditions, then this holds also for all properties $\phi$ obtained in Line~\ref{line:progression:progress} as (co-)iteration conditions are preserved by progression.
\sarahtodo{relationship safety and progression?}

For co-safety properties, by Rem~\ref{rem:safety}, the check in Line~\ref{line:progression:sat} of Alg.~\ref{alg:progression:monitoring}  can be omitted. By the above observation, we can assume that for every formula $\phi$ obtained in Line~\ref{line:progression:progress}, all iteration conditions are MCs (or all are IPCs). For these properties, satisfiability is decidable by~\cite[Thms 5 and 6]{GeattiGGW23}.

For safety properties, by Rem~\ref{rem:safety}, the check in Line~\ref{line:progression:vio} of Alg.~\ref{alg:progression:monitoring}  can be omitted. By the above observation, we can assume that for every formula $\phi$ obtained in Line~\ref{line:progression:progress}, all co-iteration conditions are MCs (or all are IPCs). Since $\neg (\psi_1 \U \psi_2) \equiv \psi_1 \R \psi_2$, all iteration conditions of $\neg \phi$ are MCs resp. IPCs, so again the satisfiability check is decidable.
 \end{proof}

We add an observation about syntactic safety and co-safety \aLTLf properties, to illustrate that Thm.~\ref{thm:decidability:progressive:safety} can be applied in practice.

\begin{proposition}
\aLTLf formulas in NNF
\begin{compactenum}
\item are safety properties if they contain only the temporal operators $\wX$ and $\R$, and constraints with lookback are not negated; and
\item are co-safety properties if they contain only the temporal operators $\X$ and $\U$, and all constraints with lookback are negated.
\end{compactenum}
\end{proposition}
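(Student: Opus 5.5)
The idea is to prove item 1 directly by structural induction with a strengthened, position-wise invariant, and then to obtain item 2 from item 1 by duality. Throughout, I read ``constraints with lookback'' as the constraints mentioning $V'$, i.e.\ the lookahead constraints of $\LL_\AA$. The only non-routine feature of the logic is that such a constraint is evaluated weakly: it holds vacuously at the last instant.

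\textbf{Invariant for item 1.} For every formula $\psi$ of the safety fragment, every trace $\tau$ of length $n$ and every $0\le i<n$, I will show:
\begin{quote}
if $\tau,i\not\models\psi$, then there is some $k$ with $i\le k<n$ such that $\sigma,i\not\models\psi$ for every trace $\sigma=\tau_{\le k}\tau'$ ($\tau'$ possibly empty).
\end{quote}
Taking $i=0$, the prefix $\tau_{\le k}$ is a bad prefix in the sense of Rem.~\ref{rem:safety}, so $\psi$ is a safety property.

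\textbf{Induction on $\psi$ (in NNF).}
\begin{compactitem}
\item Cases $\top$ and $\bot$ are trivial.
\item Boolean variables, their negations, and constraints over $V$ alone (negated or not): truth at $i$ depends only on $\tau(i)$, so $k=i$ works.
\item A non-negated lookahead constraint $c$: since it holds vacuously at the last instant, $\tau,i\not\models c$ forces $i<n-1$, so $c$ is well-defined at $i$. Its value then depends only on $\tau(i)$ and $\tau(i+1)$, which every extension of $\tau_{\le i+1}$ preserves; take $k=i+1$. Negated lookahead constraints are excluded by hypothesis. This is exactly why: $\neg c$ fails at the last instant of $\tau$ but may hold once the trace is extended, so no bad prefix need exist.
\item $\wedge$: use the witness $k$ of a failing conjunct.
\item $\vee$: both disjuncts fail; take the maximum of their witnesses.
\item $\wX\psi_1$: failure means $i<n-1$ and $\tau,i{+}1\not\models\psi_1$. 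The induction hypothesis at position $i+1$ gives $k\ge i+1$, and in every extension position $i+1$ still exists, so $\wX\psi_1$ still fails at $i$.
\item $\psi_1\R\psi_2$: this is the main step. First I would unfold Def.~\ref{def:semantics} into the standard characterisation: $\tau,i\not\models\psi_1\R\psi_2$ iff there is $j$ with $i\le j<n$, $\tau,j\not\models\psi_2$ and $\tau,l\not\models\psi_1$ for all $i\le l<j$. This is proved by a short induction on $n-i$. Applying the induction hypothesis to $\psi_2$ at $j$ and to $\psi_1$ at each such $l$ gives finitely many witnesses; let $k$ be their maximum (note $k\ge j$). In any extension of $\tau_{\le k}$, positions $i,\dots,j$ exist and the same failures persist, so the characterisation yields failure of the release at $i$.
\end{compactitem}
I expect the $\R$ case, together with the last-instant behaviour of lookahead constraints, to be the only delicate point. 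The key check is that every position the witnesses refer to lies at or before $k$, so no case secretly depends on the trace ending at the last position.

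\textbf{Item 2 by duality.} Let $\psi$ be in the co-safety fragment and push $\neg\psi$ into NNF. Then $\X$ becomes $\wX$, $\U$ becomes $\R$, and each negated lookahead constraint becomes a non-negated one. Constraints without lookahead may become negated, which is allowed. So $\neg\psi$ lies in the fragment of item 1 and is a safety property. A bad prefix for $\neg\psi$ is exactly a good prefix for $\psi$, which proves item 2. If the NNF translation of a negated non-strict constraint causes any typing issue, I would instead repeat the induction dually. There the roles swap: a negated lookahead constraint that holds at $i$ forces $i<n-1$, and $\X$ and $\U$ use the witness characterisation of $\U$.
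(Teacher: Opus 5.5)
Your proof is correct, but it takes a different route from the paper's, which is only a sketch. The paper cites Geatti et al.\ for two things: finite-trace safety languages are exactly the prefix-closed ones, and the $\wX$/$\R$ (resp.\ $\X$/$\U$) syntactic fragments of propositional \LTLf are safety (resp.\ co-safety). It then asserts that arithmetic atoms ``do not cause harm'' because lookahead is evaluated weakly, like $\wX$. Item~2 is dismissed as ``similar'', with negated lookahead constraints acting as a strict lookahead.

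You instead prove item~1 from scratch with a position-wise bad-prefix invariant. Your atomic case for a non-negated lookahead constraint is the precise content of the paper's ``weak evaluation'' remark: failure forces $i<n-1$, after which the verdict is fixed by $\tau(i)$ and $\tau(i+1)$. Your observation that $\neg c$ fails at the last instant but may hold after extension shows why the hypothesis on negated lookahead constraints is needed. Because you allow $\tau'$ to be empty, your invariant also yields prefix-closure directly, which links it to the characterisation the paper relies on.

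For item~2 you avoid a second induction by dualising through the NNF of $\neg\psi$ ($\X\mapsto\wX$, $\U\mapsto\R$, $\neg c\mapsto c$). This is sound because negated constraints stay as literals and are not rewritten into complementary comparisons. Such a rewriting would change the last-instant semantics, which is the concern your fallback remark anticipates.

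The paper's route is shorter and reuses the literature. Yours is self-contained, checks the propositional part rather than citing it, and pinpoints where the weak semantics is used. Your reading of ``lookback'' as the $V'$-constraints is the right one: $\LL_\AA$ has only lookahead, and the main-text version of the claim says ``lookahead''.
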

\begin{proof}[Proof (sketch)]
\begin{compactenum}
\item 
This can be seen using results by \citeauthor{Geatti2025SafetyAL} (\citeyear{Geatti2025SafetyAL}) who show that safety languages on finite words are exactly those that are prefix-closed.
The result for the propositional case is already stated there; it remains to show that arithmetic constraints with the stated property do not cause any harm. This is the case because lookback is by our definition evaluated in a weak way, with a similar semantics as $\wX$. 
\item The reasoning for the co-safety fragment is similar, the respective result for propositional \LTLf is already present in \cite{Geatti2025SafetyAL}. The reason why constraints with lookback must be negated is that this has the effect that lookback is evaluated in a ``strict'' as opposed to weak way, like the strict lookahead operator defined in~\cite{GGG22}.
\qedhere
\end{compactenum}
\end{proof}

\subsection{Experiments}

\paragraph{Propositional \LTLf.}
We first report additional details about the formulas and traces in the benchmark sets:
\begin{itemize}
\item[(SC)]
 This set contains 4471 formula/trace pairs with an average size of 5000 (maximal size 120000), average depth of 10 (maximal depth 20), and on average 50 propositions (maximally 500). All traces have length 500.
\item[(PM)] This set contains 3637 formula/trace pairs, all of the form $\G(\bigwedge_{i=1}^n\psi_i)$ where $\psi_i$ is an \LTLf version of a Declare constraint. Declare constraints are essentially patterns of temporal formulas, that can all be translated into \LTLf by a standard translation~\cite{DonadelloRMS22}. A common such pattern is e.g. $\m{Reponse}(a,b)$ for some propositional atoms $a$ and $b$, which transforms into the \LTLf formula $\G(a \to \X\F b)$. 
The formulas in the benchmark set have average size of 2000 (maximal 12000), an average depth of 9 (maximal depth 13), and on average 7 (maximally 24) propositions.
The traces have at most length 1350, on average their length is 18.
\end{itemize} 

\smallskip

\textit{Overall time required by different monitors.}
Next, we provide further data obtained from comparing the different monitoring approaches.
We comment on the overlap and differences of different techniques on (SC):
\begin{compactitem}
\item
\dfamon with Lydia and \progmon have an intersection of 2744 solved problems. Given the number of problems solved by each of the techniques, this overlap is relatively small, i.e., the approaches are quite orthogonal. This also explains why the combined technique solves so many more problems than each technique alone. The total time required by \progmon to monitor all traces in this overlap set is 2.6 times the time required by \dfamon/Lydia.
\item
\dfamon with Lydia and \dfamon with Spot have an overlap of 3432 problems. However, the Lydia monitor requires only about 54\% of the time needed by Spot for this shared set.
\item
\dfamon with Lydia and the combined monitor with Spot have an overlap of 3494 problems. \dfamon requires in total only 17\% of the time needed by the combined monitor with Spot for this set.
\item 
The overlap of \progmon and the combined monitor with Spot are 3670 problems, so 40 problems less than solved by \progmon alone. Hence, almost 99\% of the problems solved by \progmon alone are also solved by the combined setting. This illustrates that the combined monitor induces a bit of overhead, but not too much. The combined monitor requires in total 78\% of the time needed by \progmon for this overlap set.
\item
\dfamon with Lydia and the combined monitor with Lydia have an overlap of 3495 problems, i.e., almost all problems solved by \dfamon alone can also be solved in the combined setting. \dfamon requires in total only 20\% of the time needed by the combined monitor for this set.
\item 
The overlap of \progmon and the combined monitor with Lydia are 3669 problems. This combined monitor requires in total 80\% of the time needed by \progmon for this overlap set.
\end{compactitem}
\smallskip

\textit{Satisfiability checks in \progmon.}
For a trace of length $n$, \progmon performs at most $n$ satisfiability checks, on in each iteration. However, the results of satisfiability checks of formulas are cached,  and re-used in later iterations. Moreover, the obvious optimization is implemented that as soon as a permanent verdict is reached (\PS or \PV), the monitor does no more work but simply repeats the verdict until the length of the trace. Indeed, on the (SC) benchmarks, the maximum number of satisfiability checks done is 500 (all traces have length 500). However, the average number of satisfiability checks is only 61, so on average by far not every iteration requires a satisfiability check, due to caching and the fact that permanent verdicts are reached.

\textit{Comparison of \dfamon and \progmon times.}
Fig.~\ref{fig:comparison:SC2} compares the monitoring times of \dfamon (x-axis) and \progmon (y-axis) for the SC problems solved by both. The data points are separated according to the final monitoring state. 
Fig.~\ref{fig:comparison:PM} provides a similar comparison for the PM benchmarks. The fact that in both cases more points are above the $x=y$ diagonal illustrates once more that DFA monitoring is more efficient.
The figures also illustrate that the monitoring time of \progmon where the final monitoring state is $\PS$ or $\PV$
tends to be lower. In these cases, \progmon needs to show unsatisfiability of an \LTLf formula, which is harder with the tableau method than satisfiability. We suspect that times for these problems are lower because the hard problems that would take longer are simply not solvable for \progmon.

\begin{figure}
\resizebox{.48\textwidth}{!}{
\begin{tikzpicture}
\begin{axis}[
	width=.85\textwidth,
	height=.55\textwidth,
	xmode=log,
	ymode=log,
	xlabel={DFA monitoring},
	ylabel={progression monitoring},
	legend pos=north west,
    enlarge x limits = 0.03,
]
\addplot+[darkgreen,only marks, mark=x, mark size=1.2pt] table {dfa_vs_prog_SC_PS.dat};
\addlegendentry{$\PS$}

\addplot+[blue, only marks, mark=x, mark size=1.2pt] table {dfa_vs_prog_SC_CS.dat};
\addlegendentry{$\CS$}

\addplot+[orange, only marks, mark=x, mark size=1.2pt] table {dfa_vs_prog_SC_CV.dat};
\addlegendentry{$\CV$}

\addplot+[red, only marks, mark=x, mark size=1.2pt] table {dfa_vs_prog_SC_PV.dat};
\addlegendentry{$\PV$}

\addplot+[gray, dashed, domain=9e3:2e6, no markers] {x};
\addlegendentry{$y=x$}
\end{axis}
\end{tikzpicture}
}
\caption{Scatter plot comparing DFA monitoring and progression monitoring times ($\mu$s) on SC instances.}
\label{fig:comparison:SC2}
\end{figure}
\begin{figure}
\resizebox{.48\textwidth}{!}{
\begin{tikzpicture}
\begin{axis}[
	width=.85\textwidth,
	height=.55\textwidth,
	xmode=log,
	ymode=log,
	xlabel={DFA monitoring},
	ylabel={progression monitoring},
	legend pos=north west,
    enlarge x limits = 0.03,
    enlarge y limits = 0.0
]

\addplot+[blue, only marks, mark=x, mark size=1.2pt] table {dfa_vs_prog_PM_CS.dat};
\addlegendentry{$\CS$}

\addplot+[orange, only marks, mark=x, mark size=1.2pt] table {dfa_vs_prog_PM_CV.dat};
\addlegendentry{$\CV$}

\addplot+[red, only marks, mark=x, mark size=1.2pt] table {dfa_vs_prog_PM_PV.dat};
\addlegendentry{$\PV$}

\addplot+[gray, dashed, domain=9e3:2e6, no markers] {x};
\addlegendentry{$y=x$}
\end{axis}
\end{tikzpicture}
}
\caption{Scatter plot comparing DFA monitoring and progression monitoring times ($\mu$s) on PM instances.}
\label{fig:comparison:PM}
\end{figure}

\textit{\dfamon vs. \progmon relative to formula size.}
Figs.~\ref{fig:diff:size}, \ref{fig:diff:depth}, and \ref{fig:diff:atoms} show the difference in monitoring times between \dfamon (with Lydia) and \progmon on SC problems ordered by different parameters, i.e., the y-axis shows $t_{\mathit{dfa}} - t_{\mathit{prog}}$ where $t_{\mathit{dfa}}$ is the time required by \dfamon and $t_{\mathit{prog}}$ is the time required by \progmon, both in $\mu$s. We included all formula/trace pairs where at least one approach was successful, and in case one of the two approaches timed out, $t_{\mathit{prog}}$ resp. $t_{\mathit{dfa}}$ were set to $1.8\cdot 10^8$ (3 minutes). That is, for a point above the x-axis means that \progmon was more efficient for this formula/trace pair, and for a point below the x-axis, $\dfamon$ was more efficient. In particular, dots close to the upper border to the picture mean that \dfamon timed out or needed almost 3 minutes, while \progmon took at most a few seconds. 
The x-axes show formula size (\ref{fig:diff:size}), formula depth (\ref{fig:diff:depth}), and the number of different propositional atoms in the formula (\ref{fig:diff:atoms}), respectively.  The figures illustrate that for smaller formulas the difference is often negative, i.e., \dfamon is faster than \progmon, but for larger formulas and formulas with many atoms, the difference is typically (very) positive, i.e., \progmon is (much) faster than \dfamon. 
Also, there are no large formulas (formula size above around 1000, or more than about 80 atoms) where \dfamon has an advantage over \progmon: on the contrary, it is significantly more promising to use \progmon for such inputs.
For formula depth, we could not draw any clear conclusion.

\begin{figure}
\begin{tikzpicture}
\begin{axis}[
	width=.48\textwidth,
	height=.4\textwidth,
	xmode=linear,
	ymode=linear,
	xlabel={formula size},
	ylabel={$t_{\mathit{dfa}} - t_{\mathit{prog}}$},
	legend pos=north west,
    ymin = -180000000,
    enlarge x limits = 0.03,
    enlarge y limits = 0.0
]
\addplot+[red,only marks, mark=x, mark size=1.2pt] table {adv_size.dat};
\end{axis}
\end{tikzpicture}
\caption{Monitoring time difference between \progmon and \dfamon on SC problems, ordered by formula size. It can be concluded that for formulas with size greater than about 1000, progression monitoring performs better.}
\label{fig:diff:size}
\end{figure}

\begin{figure}
\begin{tikzpicture}
\begin{axis}[
	width=.48\textwidth,
	height=.4\textwidth,
	xmode=linear,
	ymode=linear,
	xlabel={formula depth},
	ylabel={$t_{\mathit{dfa}} - t_{\mathit{prog}}$},
	legend pos=north west,
    ymin = -180000000,
    enlarge x limits = 0.03,
    enlarge y limits = 0.0
]
\addplot+[blue,only marks, mark=x, mark size=1.2pt] table {adv_depth.dat};
\end{axis}
\end{tikzpicture}
\caption{Monitoring time difference between \progmon and \dfamon on SC problems, ordered by formula depth.}
\label{fig:diff:depth}
\end{figure}

\begin{figure}
\begin{tikzpicture}
\begin{axis}[
	width=.48\textwidth,
	height=.5\textwidth,
	xmode=linear,
	ymode=linear,
	xlabel={number of atoms},
	ylabel={$t_{\mathit{dfa}} - t_{\mathit{prog}}$},
	legend pos=north west,
    ymin = -180000000,
    enlarge x limits = 0.03,
    enlarge y limits = 0.0
]
\addplot+[darkgreen,only marks, mark=x, mark size=1.2pt] table {adv_atoms.dat};
\end{axis}
\end{tikzpicture}
\caption{Monitoring time difference between \progmon and \dfamon on SC problems, ordered by the number of different atoms in formulas. It can be concluded that for formulas with more than about 80 atoms, progression monitoring performs better.}
\label{fig:diff:atoms}
\end{figure}

\textit{Combined monitoring.}
We also implemented and tested a combined monitoring procedure that uses Lydia for formula construction, rather than Spot. The results are very similar. The version with Spot is about 5\% faster, solving about the same number of problems on SC, but a few more on PM, cf. Fig.~\ref{fig:cactus2} (which is as Fig.~\ref{fig:cactus} but with two combined approaches).

\begin{figure}
\resizebox{.48\textwidth}{!}{
\begin{tikzpicture}
\begin{axis}[
	width=.55\textwidth,
	height=.5\textwidth,
	xlabel={time (seconds)},
	ylabel={number of instances},
	legend style={at={(0.9,0.7)},anchor=east, font=\footnotesize},
    enlarge x limits = 0.0,
    enlarge y limits = 0.05,
]
\addplot+[orange, no marks] table {DFA_cactus_SC.dat};
\addlegendentry{Lydia}

\addplot+[red, no marks] table {spot_cactus_SC.dat};
\addlegendentry{Spot}

\addplot+[blue, no marks] table {prog_cactus_SC.dat};
\addlegendentry{progression}

\addplot+[magenta, no marks] table {combined_cactus_SC.dat};
\addlegendentry{Lydia combined}

\addplot+[darkgreen, no marks] table {spot_combined_cactus_SC.dat};
\addlegendentry{Spot combined}
\end{axis}
\end{tikzpicture}
}
\\
\resizebox{.48\textwidth}{!}{
\begin{tikzpicture}
\begin{axis}[
	width=.55\textwidth,
	height=.4\textwidth,
	xlabel={time (seconds)},
	ylabel={number of instances},
	legend style={at={(0.9,0.62)},anchor=east, font=\footnotesize},
    enlarge x limits = 0.0,
    enlarge y limits = 0.05,
]
\addplot+[orange, no marks] table {DFA_cactus_PM.dat};
\addlegendentry{Lydia}

\addplot+[red, no marks] table {spot_cactus_PM.dat};
\addlegendentry{Spot}

\addplot+[blue, no marks] table {prog_cactus_PM.dat};
\addlegendentry{progression}

\addplot+[magenta, no marks] table {combined_cactus_PM.dat};
\addlegendentry{Lydia combined}

\addplot+[darkgreen, no marks] table {spot_combined_cactus_PM.dat};
\addlegendentry{Spot combined}

\end{axis}
\end{tikzpicture}
}
\caption{Number of instances solved within time limit for the SC (above) and PM benchmarks (below), including different combined approaches.}
\label{fig:cactus2}
\end{figure}

\paragraph{Arithmetic \LTLf.}
Tab.~\ref{tab:arithmetic} shows the detailed results on the arithmetic benchmark set.
One clearly sees that \progmon can construct monitors for many more problems, and is more efficient.

\newcommand{\timeout}{$\infty$}
\begin{table}
\begin{footnotesize}
\begin{tabular}{@{}l@{\:}llrrr@{}}
\hline
&formula &  & \textsc{PM} & \textsc{CG} & ada \\
\hline
(A) &blood\_sugar/1 & int & 0.114 & 1.269 & 0.539 \\
(A) &blood\_sugar/1 & real & 0.098 & 1.247 & 0.566 \\
(A) &heart\_rate/1 & int & 0.091 & 1.211 & 0.567 \\
(A) &heart\_rate/1 & real & 0.079 & 1.159 & 0.520 \\
(A) &temperature/1 & int & 0.060 & 0.224 & 0.491 \\
(A) &temperature/1 & real & 0.065 & 0.210 & 0.494 \\
(A) &temperature/2 & int & 0.074 & 0.457 & 0.502 \\
(A) &temperature/2 & real & 0.060 & 0.447 & 0.505 \\
(A) & monotone\_inc & int & 0.037 & \timeout & 6.039 \\
(A) &monotone\_inc & real & 0.045 & 0.075 & 6.205 \\
(B) &LIA\_scalable1\_10 & int & 3.390 & \timeout & \timeout \\
(B) &LIA\_scalable1\_10 & real & 3.359 & \timeout & \timeout \\
(B) &LIA\_scalable1\_100 & int & 6.934 & \timeout & \timeout \\
(B) &LIA\_scalable1\_100 & real & 6.906 & \timeout & \timeout \\
(B) &LIA\_scalable1\_500 & int & \timeout & \timeout & \timeout \\
(B) &LIA\_scalable1\_500 & real & \timeout & \timeout & \timeout \\
(B) &LIA\_scalable2\_10 & int & 0.534 & 2.519 & \timeout \\
(B) &LIA\_scalable2\_10 & real & 0.607 & 2.483 & \timeout \\
(B) &LIA\_scalable2\_100 & int & \timeout & \timeout & \timeout \\
(B) &LIA\_scalable2\_100 & real & \timeout & \timeout & \timeout \\
(B) &LIA\_scalable2\_20 & int & 1.324 & 18.298 & \timeout \\
(B) &LIA\_scalable2\_20 & real & 1.332 & 17.076 & \timeout \\
(B) &LIA\_scalable2\_5 & int & 0.276 & 2.570 & \timeout \\
(B) &LIA\_scalable2\_5 & real & 0.284 & 2.369 & \timeout \\
(C) &emission/1 & int & 0.092 & \timeout & \timeout \\
(C) &emission/1 & real & 28.649 & \timeout & \timeout \\
(C) &emission/1 & int & 0.091 & \timeout & \timeout \\
(C) &emission/1 & real & 27.272 & \timeout & \timeout \\
(C) &emission/2 & int & 0.057 & \timeout & \timeout \\
(C) &emission/2 & real & 0.058 & \timeout & \timeout \\
(C) &emission/2 & int & 0.056 & \timeout & \timeout \\
(C) &emission/2 & real & 0.056 & \timeout & \timeout \\
(C) &vac\_cleaner & int & 0.029 & 0.054 & \timeout \\
(C) &vac\_cleaner & real & 0.031 & 0.052 & \timeout \\
(C) &vac\_cleaner & int & 0.045 & 0.054 & \timeout \\
(C) &vac\_cleaner & real & 0.040 & 0.058 & \timeout \\
\end{tabular}
\end{footnotesize}
\caption{Experiments for arithmetic \LTLf}
\label{tab:arithmetic}
\end{table}

\end{document}